\newtheorem{Lemma}{Lemma}
\newtheorem{Theorem}{Theorem}
\newtheorem{Algorithm}{Algorithm}
\newtheorem{Scheme}{Scheme}
\title{Learning-based Rate Adaptation for Uplink Massive MIMO Networks with Cooperative Data-Assisted Detection}
\author{Yang Li, Zezhong Zhang, Rui Wang, Kaibin Huang and Yifan Chen}
\begin{document}

\maketitle

\begin{abstract}
In this paper, the uplink adaptation for massive multiple-input-multiple-output (MIMO) networks without the knowledge of user density is considered. Specifically, a novel cooperative uplink transmission and detection scheme is first proposed for massive MIMO networks, where each uplink frame is divided into a number of data blocks with independent coding schemes and the following blocks are decoded based on previously detected data blocks in both service and neighboring cells. The asymptotic signal-to-interference-plus-noise ratio (SINR) of the proposed scheme is then derived, and the distribution of interference power considering the randomness of the users' locations is proved to be Gaussian. By tracking the mean and variance of interference power, an online robust rate adaptation algorithm ensuring a target packet outage probability is proposed for the scenario where the interfering channel and the user density are unknown.
%Moreover, the gradient of the average network goodput, measuring the average number of information bits per frame successfully delivered to the BSs in the considered network area, with respect to the data block lengths is derived, which depends on the unknown user density. A learning algorithm based on the principle of stochastic gradient descent is then proposed to adjust the data block lengths such that the average network goodput can be maximized.
\end{abstract}

\section{Introduction} \label{sec:intro}

As a technique offering sufficient degree of freedom and considerable spatial multiplexing gain by using large numbers of antennas, massive MIMO has a great potential to boost the throughput of cellular networks. However, the issue of {\em pilot contamination}\cite{Jose:09}, which refers to the undiminished inter-cell interference caused by pilot reuse, may severely degrade the performance of massive MIMO networks. One straightforward approach to mitigate the inter-cell interference in channel estimation is the {\em soft pilot reuse} scheme \cite{XudongZhu:2015,2017WCLZZZ}, where  pilot sequences are only reused at the center of cells and hence its overhead is large. 
%In \cite{Gesbert:12}, a joint channel estimation and pilot assignment mechanism was proposed to achieve interference free performance under a non-overlapping angle-of-arrival (AOA) multipath channel model. 
There are also some works introducing compressive-sensing-based channel estimation algorithms to exploit the sparsity in massive MIMO channel \cite{ Masood:2015}. All these works exploit the channel spatial correlation. In contrast, there are some works considering massive MIMO transmission schemes without such correlation.  For example in \cite{Fernandes:13}, the authors proposed to synchronize the uplink channel estimation with the downlink transmission of neighboring cells. However, this might cause severe interference between BSs in practice. Another promising approach addressing the issue of pilot contamination is to utilize the data symbols in channel estimation. Channel estimation schemes based on superimposed pilots was proposed in \cite{Upadhya:2017,2018Access}, where pilot and data symbols are transmitted alongside each other for the entire duration of the uplink frame. We proposed to utilize temporal channel correlation and detected data symbols to improve channel estimation in \cite{Liyang17} without the consideration of BS cooperation. Moreover, in \cite{2014Globecom,Ray:15,2015DSP,2017CL}, we proposed to divide the uplink frame into blocks and to utilize the detected data blocks as equivalent pilots for the successive data detection. In \cite{2017WCSP}, pilots of different lengths are assigned to cell-center users and cell-edge users to suppressing the pilot contamination efficiently.
%In most of the above works, the link-level performance metrics, including the throughput and SINR, are usually used to evaluate the proposed scheme. However, the performance evaluation from the network point of view is equally important if not more. For example, one transmission scheme may benefit the cell-edge users at the price of cell-center users, hence it is better to consider its performance at the network level. 

To study the network performance, it is important to consider the randomness on the users' geometric distribution. There exist a number of works on the massive MIMO performance analysis with the tool of stochastic geometry.  For example, an analysis on the asymptotic SINR distribution in the uplink was given in~\cite{StochasticGeo} with maximum ratio combining (MRC) and zero-forcing (ZF) receivers. 
%The spectrum efficiency and energy efficiency of $K$-tier heterogeneous networks (HetNets) were analyzed in~\cite{AnqiHe} with massive MIMO deployed in the macro cells, where the authors proposed a novel cell-association scheme to improve the energy efficiency of HetNets by offloading data traffic to small cells.
%The downlink spectral efficiency of the multicell massive MIMO was analyzed in~\cite{StochasticChau}, where the authors proposed a beamforming training (BT) and pilot-contamination-precoding (PCP) transmission scheme to mitigate pilot contamination with limited cooperation between BSs. 
However, in these works, the network performance is derived by randomizing the locations of both BSs and users. The conclusions of these works cannot be applied directly on one particular network with fixed locations of BSs. Nevertheless, although performance with fixed BSs' locations is studied in~\cite{2017GlobecomZZZ,AccessZZ,2018Globecom,2019TWC}, it assumes uniform distribution of users, which may not match the practical scenario. Moreover, the performance distribution derived in these works is complicated, and the further parameter adaptation based on it may be intractable.

\subsection{Our Contributions}

In this paper, we consider the uplink transmission of a massive MIMO network, where the user density is unknown. The main contributions are summarized below.
\begin{itemize}
	\item In this paper, we propose a scheme where the uplink frame is divided into data blocks with independent channel coding schemes, and the data symbols in the detected data blocks can be shared among neighboring BSs for the estimation of dominant interfering channels. Therefore, inter-cell interference from these channels can be mitigated, and better performance than the schemes in \cite{Ray:15,2015DSP} can be obtained. Moreover, we prove that the asymptotic distribution of interference power is Gaussian, even without uniform user distribution. 
	
	\item We consider the practical scenario that the user density is unknown, and the interfering channel is not measured at the service BSs. In this case, neither the accurate value nor the distribution of uplink SINR is known at the beginning of transmission, which leads to potential packet outage (the transmission data rate is larger than the channel capacity). Exploiting the fact that the interference is Gaussian, a robust rate adaptation algorithm given a target packet outage probability is proposed. Moreover, we also show that the distribution statistics of interference power can be analytically derived for all other users as long as they have been learned for a few number of users.	
%	\item Given the above robust rate adaptation scheme, the network performance depends on the choice of each block length and the user density. The latter is unknown. In order to optimize the block lengths, we first derive the gradient of the average network goodput, which measures the average number of information bits per frame successfully delivered to the BSs in the considered network region. Lack of the knowledge on the user density, the gradient cannot be directly calculated. Since its unbiased estimation can be obtained in each frame, we propose a stochastic gradient descent (SGD) algorithm for online block-length adaptation.
\end{itemize}

%The remaining of this paper is organized as follows. In Section \ref{sec:model}, the system model and the framework of stochastic-geometry-based performance analysis are introduced. In Section \ref{sec:scheme}, the data-assisted schemes is introduced respectively with analytical performance evaluation. In Section \ref{sec:learning}, the stochastic approximation based transmission adaptation is elaborated. In Section \ref{sec:sim}, the numerical simulation results are elaborated and compared with the analytical results. Finally, the conclusion is drawn in Section \ref{sec:sum}.

\section{System Model}\label{sec:model}

\subsection{Uplink Model of Massive MIMO Network}
The uplink transmission of a massive MIMO cellular network is considered, where BSs are deployed on a 2-dimensional plane $ \mathcal{R}^2 $. The number of antennas at each BS is $ M $. It is assumed that there is one BS locating at the origin. It is treated as the target BS of analysis for convenience. Each BS (or cell) is assigned with an index, and the index of the target BS (or cell) is $ 1 $. The location of the $ i $-th BS ($ \forall i $) is denoted by a 2-dimensional vector $ \mathbf l^i $. There can be multiple users, each with single antenna, in each cell. Due to random packet arrival, not every user has uplink data in each frame. We refer to the users with uplink data as active users. The distribution of all active uplink users in one frame, is modeled as an SPPP $ \Pi_{u} $ on $ \mathcal{R}^2 $ with density $ \lambda_{u}(\mathbf{l}) $ for location $ \mathbf{l} \in \mathcal{R}^2 $. The user density may vary for different locations. It is assumed that $ 0 < \lambda_u^l \leq \lambda_{u}(\mathbf{l}) \leq \lambda_u^u $ for all $ \mathbf{l} \in \mathcal{R}^2 $, where $ \lambda_u^l $ and $ \lambda_u^u $ are two positive constants. The distance-based association \cite{XudongZhu:2015} are considered in this paper where users are associated to BSs with minimum distance (pathloss). Let $ \Phi^{\ell} $ be the set of active users in the $ \ell $-th cell and $ |\Phi^{\ell}| $ be the cardinality of set $\Phi^{\ell}$.

The block fading channel is considered, where the channel is quasi-static within one uplink frame and varies in different frames. For the elaboration convenience, we only consider the uplink transmission within coherence bandwidth of one frame. To facilitate the proposed scheme, the uplink frame is divided into $ N + 1$ blocks, where the $ 0 $-th block containing $ L_p $ symbols is for pilot transmission, and the following blocks (from the $1$-st block to the $N$-th one) are for uplink data transmission. Let $B_i$ be the number of symbols in the $i$-th ($i=1,2,...,N$) data block, and $L = \sum\limits_{i = 1}^N {{B_i}}$ be the total length of all data blocks. The \emph{modulation and coding scheme} (MCS) for each data block can be independently adapted. We assume that the length of pilot sequences is sufficiently large and only consider the situation of $ L_p \geq |\Phi^{\ell}|, \forall \ell $, in the analysis.

%\begin{figure}
%	\centering
%	\includegraphics[scale=1.2]{frame.pdf}
%	\caption{Illustration of data-assisted uplink detection scheme.}\label{fig:flowchart}
%\end{figure}

Each user has a unique index in its service cell. The $k$-th user in the $\ell$-th cell is referred to as the $ (\ell,k) $-th user, whose location is denoted by a 2-dimensional vector $ \mathbf{\bf l}_{\ell,k} $. 
The notations on the uplink transmission are listed in the following.
\begin{itemize}
	\item $ \mathbf{x}^0_{\ell,k} \in \mathcal{C}^{1 \times L_p}, \forall (\ell,k) \in \Phi^{\ell}$, denotes the pilot sequence of the $(\ell,k)$-th user. A set of orthogonal pilot sequences are used in each cell for uplink channel estimation. To avoid excessive pilot overhead, the pilot sequences of different cells are not orthogonal. Their cross-correlation satisfies
	\begin{align}
	\frac{|\mathbf{x}^0_{\ell,k} (\mathbf{x}^0_{i,j} )^H|}{L_p}\! =\! \frac{P}{\sqrt{L_p}},\ \ \forall \ell \!\neq\! i, \ (\ell,k) \!\in\! \Phi^{\ell},\ (i,j) \!\in\! \Phi^{i}, \nonumber
	\end{align}
	where $ P $ is the transmission power of mobile users. This assumption matches the LTE systems, where pilot sequences are generated from Zadoff-Chu sequences \cite{TR25996:3GPP}. 
	
	\item $\mathbf{x}^{n}_{\ell,k} \!\!\in\!\! \mathcal{C}^{1 \times B_n}, \forall n\!\!=\!\!1,2,...,N, (\ell,k) \!\!\in\!\! \Phi^{\ell}$, denotes the uplink data symbols in the $ n $-th data block of the $ (\ell,k)$-th user, whose components are complex Gaussian distributed with zero mean and unit variance. $ \mathbf{x}^{m,n}_{\ell,k}\!\!\! =\!\! [\mathbf{x}^m_{\ell,k} \ \mathbf{x}^{m+1}_{\ell,k} \ ... \ \mathbf{x}^n_{\ell,k}]$ ($ 0 \!\leq\!\! m \!\!\leq\!\! n \!\!\leq\!\! N $) is the uplink data symbols of the $ (\ell,k)$-th user from the $ m $-th block to the $ n $-th block,
	and $ \mathbf{X}_\ell^{m,n}\!\! = [(\mathbf{x}_{\ell,1}^{m,n})^T (\mathbf{x}_{\ell,2}^{m,n})^T ... (\mathbf{x}_{\ell,|\Phi^\ell|}^{m,n})^T]^T$, where $ \mathbf{x}^{T} $ is the transpose of $ \mathbf{x} $.
	
	\item  $ \mathbf{h}_{\ell,k} \in \mathcal{C}^{M \times 1} $ denotes the uplink channel vector from the $ (\ell,k)$-th user to the target BS. It is assumed that $ \mathbf{h}_{\ell,k} \in \mathcal{CN}(0,\mathbf{R}_{\ell,k}) $ is complex Gaussian distributed with zero mean and covariance matrix $ \mathbf{R}_{\ell,k}=\mathbb{E}[\mathbf{h}_{\ell,k} \mathbf{h}_{\ell,k}^H]=\rho^1_{\ell,k} \mathbf{I}$, where $ \rho^1_{\ell,k} = \chi^1 _{\ell,k} |\mathbf l_{\ell,k}|^{-\sigma} $ denotes the large-scale fading coefficient from the $ (\ell,k) $-th user to the target BS consisting of log-normal shadowing $\chi^1_{\ell,k}$ and pathloss, $\mathbf{I}$ is an identity matrix, $ \sigma $ is the pathloss exponent. It is assumed that $\chi^1_{\ell,k} = 10 ^{\zeta/ 10} $, where is $\zeta$ a Gaussian random variable with zero mean and standard deviation $\theta$. Similarly, let $ \rho^i_{\ell,k} $ and $ \chi^i_{\ell,k} $ be the large-scale fading coefficient and shadowing from the $ (\ell,k) $-th user to the $ i $-th BS, respectively. In order to simplify the expressions, we shall neglect the superscript $ 1 $ and use $ \rho_{\ell,k} $ and $ \chi_{\ell,k} $ to represent the large-scale fading coefficient and shadowing from the $ (\ell,k) $-th user to the target BS in remaining parts. 
	
	\item $ \mathbf{H}_{\ell}\!\! =\!\! [\mathbf{h}_{\ell,1} \ \mathbf{h}_{\ell,2} \ ... \ \mathbf{h}_{\ell,|\Phi^\ell|}] $, $ \widetilde{\mathbf{R}}_\ell \!\!=\!\! \mathbb{E} [\mathbf{H}^H_\ell \mathbf{H}_\ell] \!\!=\!\! M \!\!\times\!\! \text{diag} \{\rho_{\ell,1}, \rho_{\ell,2}, ..., \rho_{\ell,|\Phi^{\ell}|}\}$ and $ \mathbf{R}_\ell \!\!=\!\! \mathbb{E} [ \mathbf{H}_\ell \mathbf{H}^H_\ell] \!\!=\!\! \sum\limits_{k=1}^{|\Phi^\ell|} \rho_{\ell,k} \mathbf{I}. $
	
\end{itemize}

Without loss of generality, we shall first study the uplink performance of the target cell and extend the conclusion to all the cells. The received symbols of the target BS from the $ m $-th block to the $ n $-th block ($ 0 \leq m \leq n \leq N $) are given by
\begin{equation}
\mathbf{Y}^{m,n} = \mathbf{H}_{1} \mathbf{X}_{1}^{m,n}+ \sum_{\forall \ell \neq 1}  \mathbf{H}_{\ell} \mathbf{X}_\ell^{m,n} + \mathbf{Z}^{m,n}, \ 0 \leq m \leq n \leq N, \nonumber
\end{equation}
where $\mathbf{Z}^{m,n}$ is complex white Gaussian noise with zero mean and variance $\sigma_z^2$ for each element. %In the preliminary work \cite{Upadhya:2017}, we proposed to exploit the uplink frame structure with independent coding blocks via an iterative channel estimation and data detection scheme. In the next section, we extend the scheme by exploiting the BS cooperation, where detected data blocks of neighboring cells can be shared via backhaul. 

\section{Proposed Uplink Scheme and Analysis} \label{sec:scheme}

The cooperative uplink detection scheme is presented below.

\begin{Scheme}[Uplink Detection with Decoded Interference] 
	We only elaborate the uplink detection steps of the target cell for convenience, which is also applied on all other cells. 
	\begin{itemize}
		\item {\bf Step I:} Initialize the iteration index by $ i=0 $.
		
		\item {\bf Step II:} In the $ i $-th iteration, the target BS estimates the uplink channels of its active users according to the pilot sequences $ \mathbf{X}_1^0 $ and the detected data symbols of the previous iterations $ \mathbf{X}_1^{1,i} $. The estimated channel is denoted as $ \widehat{\mathbf{H}}_{1}^i=[\widehat{\mathbf{h}}_{1,1}^i \quad
		\widehat{\mathbf{h}}_{1,2}^i \quad ... \quad \widehat{\mathbf{h}}_{1,|\Phi^{1}|}^i ]=\mathbf{Y}^{0,i} \mathbf{Q}^{i}_{in} $. The estimator $ \mathbf{Q}^i_{in} $ based on minimum-mean-square-error (MMSE) criterion is given as
		\begin{align}
		\mathbf{Q}^{i}_{in} \!&= [\mathbf{q}^i_1 \ \mathbf{q}^i_2 \ ... \ \mathbf{q}^i_{|\Phi^{1}|}]\nonumber\\
		&=\arg\min_{\mathbf{Q}^{i}_{in}} tr \bigg\{ \mathbb{E} \bigg[(\widehat{\mathbf{H}}_{1}^i - \mathbf{H}_{1})^{H} (\widehat{\mathbf{H}}_{1}^i - \mathbf{H}_{1}) \bigg| \mathbf{X}_1^{0,i} \bigg] \bigg\} \nonumber\\
		\!&= \!\!\bigg[\!(\mathbf{X}_1^{0,i})^{\!H}\! \widetilde{\mathbf{R}}_{1}\! \mathbf{X}_1^{0,i} \!\!+\!\! P \!\!\sum_{\forall \ell\neq 1}\!\! tr(\widetilde{\mathbf{R}}_{\ell}) \mathbf{I} \!\!+\!\! |\Phi^{1}|\mathbf{I} \!\bigg]^{\!\!-1}\!\!\!\!\!\! (\mathbf{X}_1^{0,i})^{\!H}\! \widetilde{\mathbf{R}}_{1},\nonumber
		\end{align}
		where the expecation is take over all possible $\mathbf{Y}^{0,i}$.
		
		\item {\bf Step III:} If $ i \leq d-1 $, the target BS detects the uplink data in the $ (i+1) $-th block, denoted as $ \widehat{\mathbf{X}}^{i+1}_1 $, according to the latest channel estimation $ \widehat{\mathbf{H}}_{1}^i $ with
		\begin{equation}
		\widehat{\mathbf{h}}_{1,k}^{i} = \mathbf{h}_{1,k} +  \underbrace{\mathbf{h}_{1,k} \bigg( \mathbf{x}_{1,k}^{0,i} \mathbf{q}_{k}^i -1\bigg) + \sum\limits_{\forall (\ell,j) \neq (1,k)} \mathbf{h}_{\ell,j} \mathbf{x}_{\ell,j}^{0,i} \mathbf{q}_{k}^i}_{\mbox{Denoted as }\Delta \mathbf{h}_{1,k}^{i}},\nonumber
		\end{equation}
		where the MMSE-based data detector is derived as
		\begin{align}
		\mathbf{S}_{in}^{i+1}\!\!&=\arg\min_{\mathbf{S}_{in}^{i+1}} tr \bigg\{\! \mathbb{E}\! \bigg[\! \widehat{\mathbf{X}}_1^{i+1}\!\!-\!\!\mathbf{X}_1^{i+1} \bigg]^{\!\!H}\!\! \bigg[ \widehat{\mathbf{X}}_1^{i+1}\!\!-\!\!\mathbf{X}_1^{i+1} \bigg| \widehat{\mathbf{H}}_1^i \bigg]\!\bigg\} \nonumber\\
		&= \!(\widehat{\mathbf{H}}_{1}^{i})^{\!H} \!\!\underbrace{\!\bigg[\! \widehat{\mathbf{H}}_{1}^{i}\! (\widehat{\mathbf{H}}_{1}^{i})^{\!H} \!\!+ \!\! \sum_{\forall j}\!\! \Delta \mathbf{R}_{1,j}^i  \!\!+\! \!\!\!\sum_{\forall \ell\neq 1, k} \!\!\!\!\mathbf{R}_{\ell,j} \!\!+\!\! \frac{\mathbf{I}}{P} \!\bigg]^{\!\!-1}}_{\mbox{Denoted as }\mathbf{\Psi}^{i}_{in}}\!\!.
		\end{align}
		The BS decodes the information bits in $\widehat{\mathbf{X}}_1^{i+1}$, and reconstructs $ \mathbf{X}^{i+1}_1 $, which will be used as pilots in the next iteration. Let $ i=i + 1 $, and jump to Step II. 
		
		If $ i \ge d $, before data detection, the target BS uses $ \widehat{\mathbf{H}}_{1}^i $ to cancel the detected signals from the received signals from the pilot sequence to the $(i-d)$-th block as follows
		\begin{align}
		\mathbf{Y}^{0,i-d}_{intf} =& \mathbf{Y}^{0,i-d} - \widehat{\mathbf{H}}_{1}^i \mathbf{X}_1^{0,i-d}\nonumber\\
		 =& \sum_{\forall \ell \neq 1} \mathbf{H}_{\ell} \mathbf{X}_\ell^{0,i-d} \!-\! \Delta\mathbf{H}_{1}^i\mathbf{X}_{1}^{0,i-d} + \mathbf{Z}^{0,i-d},
		\end{align}
		where $ \Delta\mathbf{H}_{1}^i = \widehat{\mathbf{H}}^i_1 - \mathbf{H}_1 $.
		
		\item {\bf Step IV:} Estimate the channel between the users in $ \Phi_{co}^1 $ and the target BS, denoted as $ \mathbf{H}_{intf} =[\mathbf{h}_{\ell,j}]_{(\ell,j) \in \Phi_{co}^1}$, from $ \mathbf{Y}^{0,i-d}_{intf} $ according to the MMSE channel estimation. The estimated channel $ \widehat{\mathbf{H}}_{intf}^i $ and the estimator $ \mathbf{Q}_{co}^i $ are given as
		\begin{eqnarray}
		\widehat{\mathbf{H}}^{i}_{intf}=[\widehat{\mathbf{h}}^{i}_{\ell,j}]_{(\ell,j)\in \Phi_{co}^1}  =\mathbf{Y}^{0,i-d}_{intf} \mathbf{Q}^{i}_{co},\nonumber
		\end{eqnarray}
		\begin{align}
		\mathbf{Q}^{i}_{co} \!=& \bigg[(\mathbf{X}_{intf}^{0,i-d})^{\!H} \widetilde{\mathbf{R}}_{intf} \mathbf{X}_{intf}^{0,i-d} \!\!+\! M\! P\!\!\!\! \sum_{\forall (\ell,j) \atop \notin \Phi_{co}^1 \cup \Phi^{1}}\!\!\!\! \rho_{\ell,j} \mathbf{I} \!+\! |\Phi_{co}^1|\mathbf{I} \nonumber\\
		\!\!+&(\mathbf{X}_{1}^{0,i-d})^{\!H} \!\mathbb{E} [(\!\Delta\! \mathbf{H}_1^i\!)^{\!H}\!\! \Delta\! \mathbf{H}_1^i] \mathbf{X}_{1}^{0,i-d}\!\bigg]^{\!\!-1} \!\!\!\!\!
		(\mathbf{X}_{intf}^{0,i-d})^{\!H} \!\widetilde{\mathbf{R}}_{intf} \nonumber,
		\end{align}
     	where $	\mathbb{E} [(\Delta \mathbf{H}_1^i)^H \Delta \mathbf{H}_1^i]= \bigg[\widetilde{\mathbf{R}}_1^{-1} + \frac{\mathbf{X}_1^{0,i} (\mathbf{X}_1^{0,i})^H}{P \sum_{\forall \ell \neq 1} tr(\widetilde{\mathbf{R}}_\ell)} \bigg]^{-1}$,
		$\widetilde{\mathbf{R}}_{intf} = \mathbb{E} [(\mathbf{H}_{intf})^H \mathbf{H}_{intf}]$, and $ \mathbf{X}_{intf}^{0,i-d} $ is the matrix aggregating the signals from pilot sequence to the $(i-d)$-th uplink block of users in $ \Phi_{co}^1 $.
		
		\item {\bf Step V:} The target BS detects the uplink data in the $ (i+1) $-th block, denoted as $ \widehat{\mathbf{X}}^{i+1}_1 $, according to the latest channel estimation $ \widehat{\mathbf{H}}_{1,co}^i = [\widehat{\mathbf{H}}_{1}^i, \widehat{\mathbf{H}}_{intf}^i] $. The MMSE-based data detector is given as
		\begin{align}
		\mathbf{S}^{i+1}_{co} = &(\widehat{\mathbf{H}}_{1,co}^i)^{H}\! \bigg[ \widehat{\mathbf{H}}_{1,co}^i (\widehat{\mathbf{H}}_{1,co}^i)^{H} \!\!\!+\!\! \!\!\!\sum\limits_{\forall (\ell,j) \atop \notin \Phi_{co}^1 \cup \Phi^{1}}\!\!\!\!\! \mathbf{R}_{\ell,j}
		\!+\!\! \sum\limits_{\forall k}\!\! \Delta \mathbf{R}^i_{1,k} \nonumber\\
		&+ \sum\limits_{\forall (\ell,j)\atop \in \Phi_{co}^1} \Delta \mathbf{R}^i_{\ell,j} + \mathbf{I}/P \bigg]^{-1}
		= (\widehat{\mathbf{H}}_{1,co}^i)^H \mathbf{\Psi}^{i}_{co}. \nonumber
		\end{align}
		If $ i<N-1$, the BS reconstructs $ \mathbf{X}^{i+1}_1 $, which will be used as pilots in the next iteration, and the algorithm jumps to Step II; otherwise, the iteration terminates.
	\end{itemize}
	\label{sch:scenario2}
\end{Scheme}

\begin{figure}[!htpb]
	\begin{minipage}[t]{0.45\linewidth}%设定图片下字的宽度，在此基础尽量满足图片的长宽
		\centering
		\includegraphics[height = 90pt]{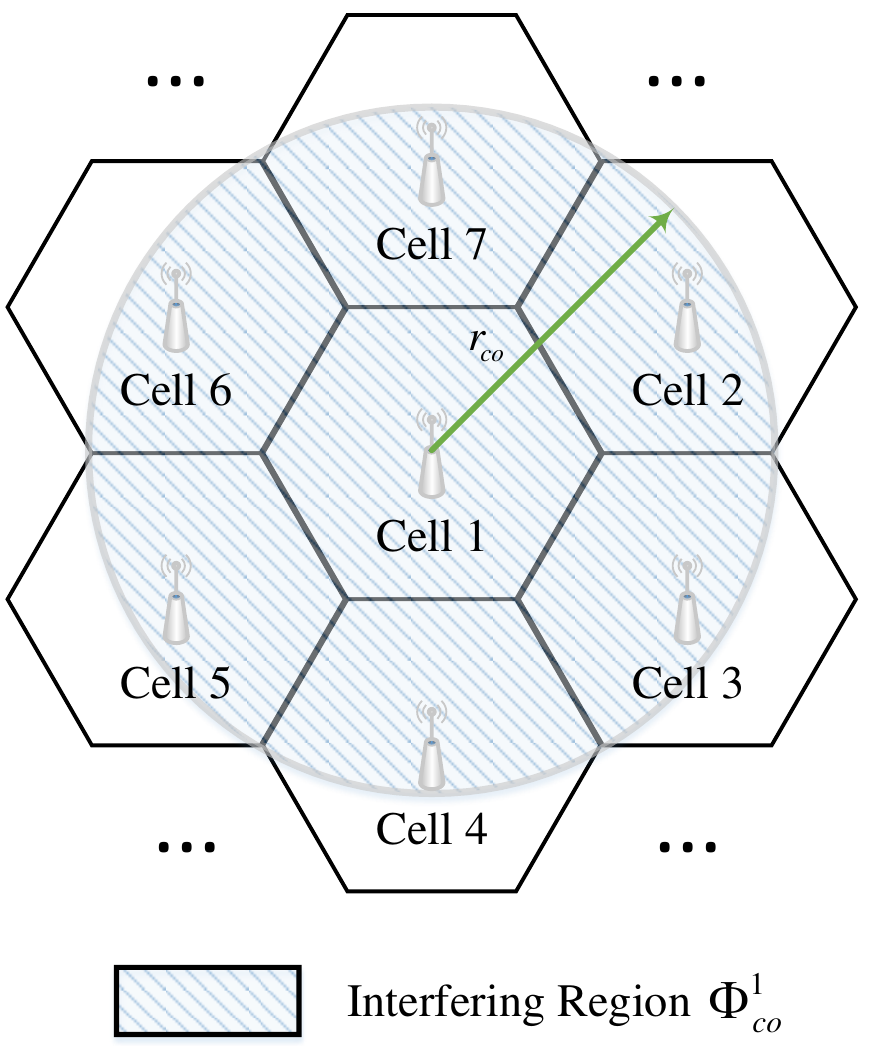}
		\caption*{(a) BS coorperation in an hexagonal network.}%加*可以去掉默认前缀，作为图片单独的说明
		\label{fig:side:a}
	\end{minipage}
	\begin{minipage}[t]{0.5\linewidth}%需要几张添加即可，注意设定合适的linewidth
		\centering
		\includegraphics[height = 95pt]{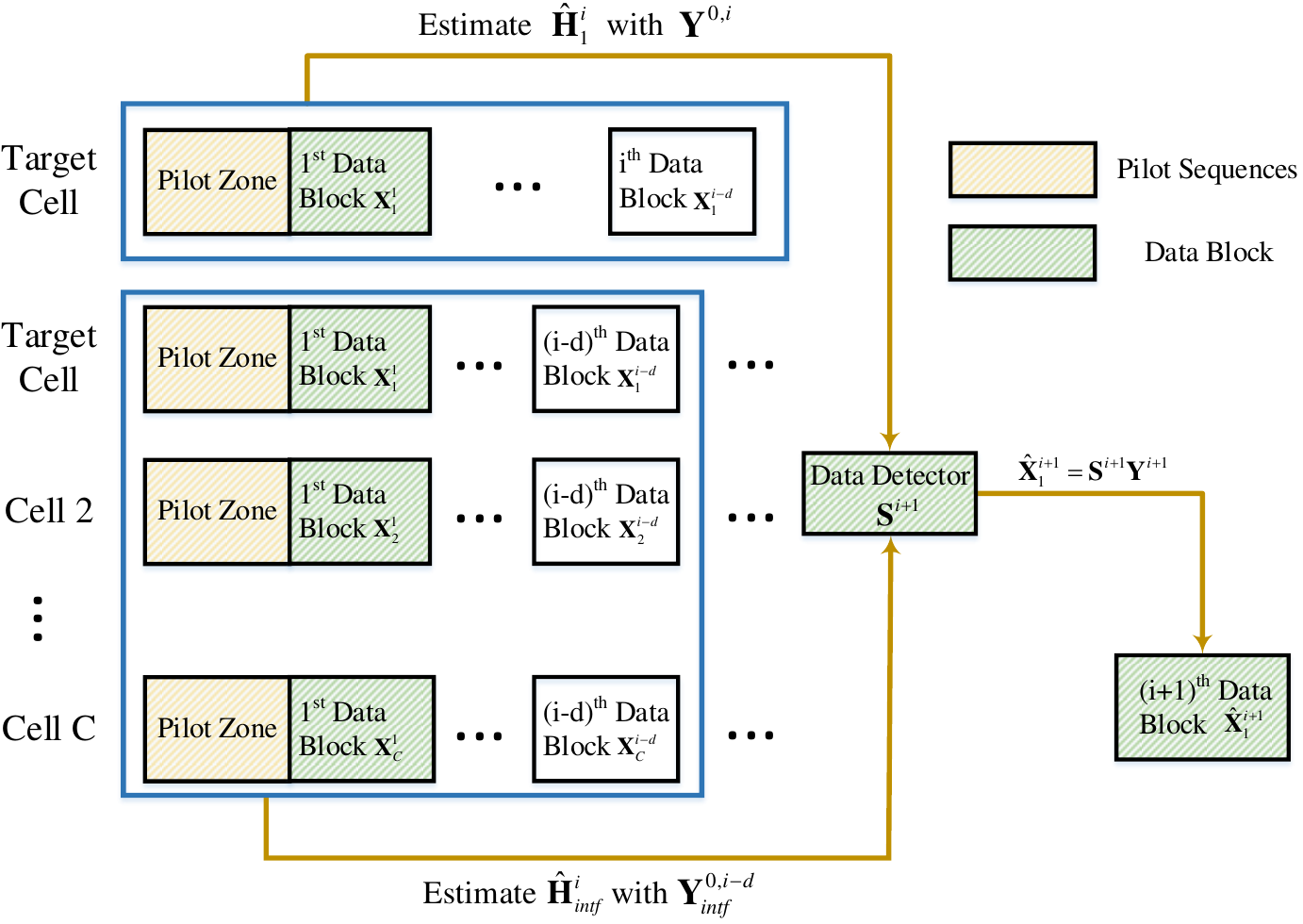}
		\caption*{(b) The detection procedure of the $(i+1)$-th data block ($i \ge d$) in the target cell.}
		\label{fig:side:b}
	\end{minipage}
	\caption{The proposed data-assisted transmission and detection scheme via BS cooperation.}\label{proposed}
\end{figure}

%Note that $ \mathbf{x}_{1,m}^{0,i} $ ($ \forall m $) is known to the target BS, the MMSE channel estimator $ \mathbf{q}_{k}^i $ could effectively mitigate the intra-cell interference terms ($ \mathbf{x}_{1,m}^{0,i}$ and $\mathbf{q}_{k}^i $ are almost orthogonal for $ m\neq k $). On the other hand, the inter-cell interference depends on the cross-correlation between  $ \mathbf{x}_{\ell,j}^{0,i}$ ($ \ell\neq 1 $) and $ \mathbf{q}_{k}^i  $. With larger number of symbols in $ \mathbf{x}_{\ell,j}^{0,i}$, such cross-correlation becomes smaller. In the $ i $-th iteration, when $ i \leq d-1 $, the data block $ \mathbf{X}_1^{i+1} $ could be detected according to the estimated channel matrix $ \widehat{\mathbf{H}}^i_1 $. The MMSE detector is given below \cite{Hoydis:13}

%where the expectation is taken over all possible data symbols and channel estimation error, and $\Delta \mathbf{R}_{1,k}^i =  \mathbb{E} [\Delta \mathbf{h}_{1,k}^{i} (\Delta \mathbf{h}_{1,k}^{i})^{H}]$. 

%\begin{Remark}
%	Note that users in the set of $ \Phi_{co} $ should be aware at the neighboring BSs. This can be achieved if users keep tracking the pathloss to all the BSs of neighboring cells, which is also necessary for the preparation of handover. Alternatively, the detection of set $ \Phi_{co} $  can also be accomplished via positioning techniques, like Global Positioning System. Moreover, the determination of $ \Phi_{co} $ can also be made by comparing large-scaling fading coefficient (instead of pathloss) with a threshold. The analysis is similar and it is omitted here due to page limitation.
%\end{Remark}

\subsection{Asymptotic Uplink SINR}\label{sub:asp-sinr}

Since the active users are independently distributed, we only provide the uplink SINR analysis of the first active user in the target cell. The first user of the target cell is called as the target user in the remainder of this paper. With the proposed scheme, the uplink SINR of the target user is then given by (\ref{eqn:ul-sinr}) and (\ref{eqn:sir-sc2}) for $i \le d-1$ and $i \ge d$, respectively. In both SINR expressions, the terms I, II and III are due to the intra-cell interference, channel estimation error, and inter-cell interference, respectively. The expectation is taken over all possible small-scale fading in other cells.

\newcounter{mytempeqncnt}
\begin{figure*}[!t]
	\normalsize
	\setcounter{equation}{2}
\begin{equation}\small
\gamma_{1,1}^{i+1} = \frac{|(\widehat{\mathbf{h}}_{1,1}^{i})^H \mathbf{\Psi}^{i}_{in} \widehat{\mathbf{h}}_{1,1}^{i}|^2}{\underbrace{\sum_{\forall k \neq 1}|(\widehat{\mathbf{h}}_{1,1}^{i})^H \mathbf{\Psi}^{i}_{in} \widehat{\mathbf{h}}_{1,k}^{i}|^2}_{\mbox{Interference I}} + \mathbb{E} \bigg[ \!
	\underbrace{\sum_{\forall k} | (\widehat{\mathbf{h}}_{1,1}^{i})^H \mathbf{\Psi}^{i}_{in} \Delta \mathbf{h}_{1,k}^{i}|^2}_{\mbox{Interference II}}+ \!\!\underbrace{\sum_{\forall \ell \neq 1, j} | (\widehat{\mathbf{h}}_{1,1}^{i})^H \mathbf{\Psi}^{i}_{in} \mathbf{h}_{\ell,j}^i|^2}_{\mbox{Interference III}} \bigg| \widehat{\mathbf{H}}_{1}^{i} \bigg] \!+\! \underbrace{ \frac{\left\|(\widehat{\mathbf{h}}_{1,1}^{i})^H \mathbf{\Psi}^{i}_{in}\right\|^2}{P}}_{\mbox{Noise}}}
\label{eqn:ul-sinr}
\end{equation}
\begin{align}\small
\gamma_{1,1}^{i+1} \!=\!\! \frac{|(\widehat{\mathbf{h}}_{1,1}^{i})^H \mathbf{\Psi}^{i}_{co} \widehat{\mathbf{h}}_{1,1}^{i}|^2}{\underbrace{\sum_{{(\ell,j) \neq (1,1),} \atop {(\ell,j)\in \Phi^{1} \cup \Phi_{co}^1}  }\!\!|(\widehat{\mathbf{h}}_{1,1}^{i})^H \mathbf{\Psi}^{i}_{co} \widehat{\mathbf{h}}_{\ell,j}^{i}|^2}_{\mbox{Interference I}} \!+ \mathbb{E} \bigg[ \!\!\!\!
	\underbrace{\sum_{\forall (\ell,j) \atop \in \Phi^{1} \cup \Phi_{co}^1}\!\!\!\! |(\widehat{\mathbf{h}}_{1,1}^{i})^H \mathbf{\Psi}^{i}_{co} \Delta \mathbf{h}_{\ell,j}^{i}|^2}_{\mbox{Interference II}} \!+\!\!\!\!\!
	\underbrace{\sum_{\forall (\ell,j) \atop \notin \Phi^{1} \cup \Phi_{co}^1} \!\!\!\!|(\widehat{\mathbf{h}}_{1,1}^{i})^H \mathbf{\Psi}_{co}^{i} \mathbf{h}_{\ell,j}|^2}_{\mbox{Interference III}} \bigg| \widehat{\mathbf{H}}_{1,co}^{i}\!\bigg] \!\!+\! \underbrace{ \frac{\left\|(\widehat{\mathbf{h}}_{1,1}^{i})^H \!\mathbf{\Psi}^{i}_{co}\right\|^2}{P}}_{\mbox{Noise}}} . \label{eqn:sir-sc2}
\end{align}
	\hrulefill
	\vspace*{0pt} 
\end{figure*}
\setcounter{equation}{4}

Then, the asymptotic expressions of the uplink SINR when there are sufficient data symbols involved in channel estimation are given by the following lemma.
\begin{Lemma}[Asymptotic Expressions of SINR] \label{lem:sce1}
	Let $\Phi_{eff} = \Phi^{1} \!\cup \Phi_{co}^1$, $L_i \!\!=\! \!L_p \!+\!\! \sum\limits_{m = 1}^i\!\! {{B_m}}$ and $ L_i^{\prime} \!\!=\!\! L_p \!+\!\! \sum\limits_{m = 1}^{i - d}\!\! {{B_m}}  $. For sufficiently large $ M $ and $L_i^{\dagger}$, the uplink SINR of the target user is
	\begin{eqnarray}
	\gamma_{1,1}^{i+1} &\rightarrow& \frac{1}{(\frac{{\left| {{\Phi^\dagger _i}} \right|}}{{{L_i^\dagger}}} + 1)\sum\limits_{\forall (\ell,j) \notin \Phi^\dagger _i} \frac{1}{M} \frac{\rho_{\ell,j}}{\rho_{1,1} } + \frac{1}{{L_i^\dagger}} (\frac{\rho_{\ell,j}}{\rho_{1,1} })^2 }, \label{eqn:lemma_sc1}
	\end{eqnarray}
	where $ \Phi^\dagger_i = {\Phi ^{1}}$, ${{L_i^\dagger}} = L_i$ for $ i \leq d-1 $ and  $ \Phi^\dagger_i = {\Phi _{eff}}$, ${{L_i^\dagger}} = {L_i^\prime}$ for $ i \ge d $.
\end{Lemma}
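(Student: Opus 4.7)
The plan is to exploit two separate asymptotic regimes in tandem: large $M$ (which induces channel hardening and asymptotic orthogonality of the estimated channel vectors in the target cell), and large $L_i^{\dagger}$ (which drives the MMSE channel estimation error towards zero and eliminates the effective pilot contamination caused by reusing non-orthogonal training across cells). I would handle the two cases $i\le d-1$ and $i\ge d$ in parallel, since Step II of Scheme~1 has the same MMSE structure in both cases, only the effective set of ``known'' channels changes from $\Phi^{1}$ to $\Phi_{\mathrm{eff}}$ and the effective pilot length from $L_i$ to $L_i^{\prime}$.

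First I would study the MMSE estimator $\mathbf{Q}^{i}_{in}$ (respectively $[\mathbf{Q}^{i}_{in},\mathbf{Q}^{i}_{co}]$). Since the intra-cell pilot sequences are orthogonal and the reconstructed data symbols of the previous $i$ blocks are i.i.d.\ $\mathcal{CN}(0,1)$, the Gram matrix $(\mathbf{X}^{0,i}_1)^H\mathbf{X}^{0,i}_1$ concentrates around $L_i\mathbf{I}$ by the strong law of large numbers, while the inter-cell cross-correlation terms contribute only $O(1)$ per entry (recall $|\mathbf{x}^{0}_{\ell,k}(\mathbf{x}^{0}_{i,j})^H|/L_p = P/\sqrt{L_p}$). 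Hence for large $L_i^{\dagger}$, $\mathbf{Q}^{i}_{in}\to L_i^{-1}(\mathbf{X}^{0,i}_1)^H$ in the appropriate sense, and the per-entry residual estimation covariance satisfies $\mathbb{E}[\Delta\mathbf{h}^i_{1,k}(\Delta\mathbf{h}^i_{1,k})^H]\asymp \tfrac{1}{L_i^{\dagger}}(\text{interference-plus-noise power})\mathbf{I}$. In particular $\widehat{\mathbf{h}}^{i}_{1,k}=\mathbf{h}_{1,k}+\Delta\mathbf{h}^{i}_{1,k}$ with a vanishing error whose covariance I would track explicitly, because its trace determines the magnitude of Interference~II in (\ref{eqn:ul-sinr})--(\ref{eqn:sir-sc2}).

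Next, for the data detection term I would invoke channel hardening: for any two users in $\Phi^{\dagger}_i$, $\tfrac{1}{M}(\widehat{\mathbf{h}}^{i}_{1,1})^H\widehat{\mathbf{h}}^{i}_{\ell,j}\to\rho_{1,1}\delta_{(\ell,j),(1,1)}$ almost surely as $M\to\infty$, so $\widehat{\mathbf{H}}^{i}_{1,co}(\widehat{\mathbf{H}}^{i}_{1,co})^H\to M\,\mathrm{diag}(\rho_{\ell,j})$ on the relevant subspace, and the bracket defining $\mathbf{\Psi}^{i}_{in/co}$ is dominated by this diagonal term. Consequently, after a Woodbury-style simplification, $(\widehat{\mathbf{h}}^{i}_{1,1})^H\mathbf{\Psi}^{i}$ behaves asymptotically like $\tfrac{1}{M\rho_{1,1}}(\widehat{\mathbf{h}}^{i}_{1,1})^H$, which kills Interference~I and makes the numerator $\asymp 1$. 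For Interference~III, I would condition on $\widehat{\mathbf{H}}^{i}_{1,co}$ and use independence of $\mathbf{h}_{\ell,j}$ for $(\ell,j)\notin\Phi^{\dagger}_i$ to get $\mathbb{E}|(\widehat{\mathbf{h}}^{i}_{1,1})^H\mathbf{h}_{\ell,j}|^2=\rho_{\ell,j}\|\widehat{\mathbf{h}}^{i}_{1,1}\|^2\sim M\rho_{1,1}\rho_{\ell,j}$, giving the $\tfrac{1}{M}\tfrac{\rho_{\ell,j}}{\rho_{1,1}}$ contribution. For Interference~II, using the covariance from the first step, I would show the pilot-contamination-induced term contributes the $\tfrac{|\Phi^{\dagger}_i|}{L_i^{\dagger} M}\tfrac{\rho_{\ell,j}}{\rho_{1,1}}$ piece together with the $\tfrac{1}{L_i^{\dagger}}(\tfrac{\rho_{\ell,j}}{\rho_{1,1}})^2$ piece, the latter stemming from the coherent combination of interfering channels along the non-orthogonal pilot direction. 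Finally, I would check that the noise term $\|(\widehat{\mathbf{h}}^{i}_{1,1})^H\mathbf{\Psi}^{i}\|^2/P$ is negligible compared to the interference under the massive MIMO scaling.

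The main obstacle I foresee is not any single step but the bookkeeping of the coupled limits. In particular, the Interference~II term mixes the estimation error $\Delta\mathbf{h}^{i}_{\ell,j}$ (whose covariance depends on the inter-cell interference power that we are ultimately trying to characterize) with the MMSE combiner $\mathbf{\Psi}^{i}_{co}$, so a careful Sherman--Morrison expansion of $\mathbf{\Psi}^{i}_{co}$ around its leading $M\,\mathrm{diag}(\rho)$ term is needed to separate the $\tfrac{1}{M}$ and $\tfrac{1}{L_i^{\dagger}}$ scales and to verify that the cross scale $\tfrac{1}{ML_i^{\dagger}}$ indeed combines as $\tfrac{|\Phi^{\dagger}_i|}{L_i^{\dagger}}\cdot\tfrac{1}{M}$ in (\ref{eqn:lemma_sc1}). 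Once this decomposition is done, the remaining calculations reduce to trace identities and a routine application of the continuous mapping theorem, and the two cases $i\le d-1$ and $i\ge d$ are covered by merely replacing $(\Phi^{1},L_i)$ with $(\Phi_{\mathrm{eff}},L_i^{\prime})$.
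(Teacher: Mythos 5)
Your overall architecture matches the paper's Appendix A almost exactly: a Woodbury/matrix-inversion-lemma expansion of the MMSE combiner around its dominant diagonal part, asymptotic orthogonality of the estimated in-cell channels for large $M$ (so the signal term tends to $1$ and Interference I is $\mathcal{O}(1/M^2)$), term-by-term asymptotics for Interferences II and III, and the observation that the noise term $\sim \frac{1}{PM\rho_{1,1}}$ is negligible. The paper additionally invokes Lemma 1 of Hoydis et al.\ for the Interference I bound and the mean of the inverse-chi-squared distribution to evaluate $\|\mathbf{q}_k^i\|^2 \approx 1/\|\mathbf{x}_{1,k}^{0,i}\|^2$, but these are details within the same strategy.

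There is, however, one genuine gap in your handling of Interference III, and it sits at the conceptual core of the lemma. You propose to ``use independence of $\mathbf{h}_{\ell,j}$ for $(\ell,j)\notin\Phi^{\dagger}_i$'' to conclude $\mathbb{E}|(\widehat{\mathbf{h}}^{i}_{1,1})^H\mathbf{h}_{\ell,j}|^2=\rho_{\ell,j}\|\widehat{\mathbf{h}}^{i}_{1,1}\|^2$. This independence is false: $\widehat{\mathbf{h}}^{i}_{1,1}=\mathbf{Y}^{0,i}\mathbf{q}_1^i$ contains the contamination component $\mathbf{h}_{\ell,j}\,\mathbf{x}^{0,i}_{\ell,j}\mathbf{q}_1^i$ from every out-of-cell user, because the training across cells is non-orthogonal. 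The resulting correlation is exactly what produces the second denominator term: the paper's computation of Interference III yields $\mathbb{E}|\mathbf{h}_{1,1}^H\mathbf{h}_{\ell,j}/M|^2 + \|\mathbf{q}_1^i\|^2 P\rho_{\ell,j}^2 \approx \frac{\rho_{1,1}\rho_{\ell,j}}{M}+\frac{\rho_{\ell,j}^2}{L_i-1}$, i.e.\ \emph{both} the $\frac{1}{M}\frac{\rho_{\ell,j}}{\rho_{1,1}}$ and the $\frac{1}{L_i^{\dagger}}(\frac{\rho_{\ell,j}}{\rho_{1,1}})^2$ pieces live in Interference III. Your attempt to recover the $\frac{1}{L_i^{\dagger}}(\frac{\rho_{\ell,j}}{\rho_{1,1}})^2$ piece from Interference II instead cannot work: since $\Delta\mathbf{h}^{i}_{1,k}$ is dominated by $\sum_{(\ell,j)\notin\Phi^{\dagger}_i}\mathbf{h}_{\ell,j}\mathbf{x}^{0,i}_{\ell,j}\mathbf{q}_k^i$ with $\|\mathbf{q}_k^i\|^2\approx 1/\|\mathbf{x}^{0,i}_{1,k}\|^2$, Interference II evaluates to $\frac{|\Phi^{\dagger}_i|}{L_i^{\dagger}}$ times the \emph{same} per-interferer quantity as Interference III, so it contributes $\frac{|\Phi^{\dagger}_i|}{L_i^{\dagger}}\big[\frac{\rho_{\ell,j}}{M\rho_{1,1}}+\frac{\rho_{\ell,j}^2}{L_i^{\dagger}\rho_{1,1}^2}\big]$ and can never supply a coefficient-one term of order $1/L_i^{\dagger}$. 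Carried out consistently, your bookkeeping would either drop the $\frac{1}{L_i^{\dagger}}(\frac{\rho_{\ell,j}}{\rho_{1,1}})^2$ term at leading order or insert it without justification; the fix is to split $\widehat{\mathbf{h}}^{i}_{1,1}=\mathbf{h}_{1,1}+\Delta\mathbf{h}^{i}_{1,1}$ inside Interference III and account explicitly for the cross-correlation between $\Delta\mathbf{h}^{i}_{1,1}$ and $\mathbf{h}_{\ell,j}$, which is what the paper does.
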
 

\begin{proof}
	Please refer to Appendix A.
\end{proof}

From Lemma \ref{lem:sce1}, we can observe that the uplink interference in (\ref{eqn:lemma_sc1}) is due to the users outside target cell $ \Phi^{1} $ when $i \le d-1$, and the users outside $ \Phi^{1} \cup \Phi_{co}^1 $ when $i \ge d+1$. Thus BS cooperation will lead to better performance. It can be observed that the SINR of the target user is a function of the locations of interfering users, which are usually unknown to each service BS. Hence, the accurate value of uplink SINR is hard to predict before transmission. Considering the randomness in active user distribution, we continue to analyze the distribution of the asymptotic SINR in the following subsections, which is necessary for robust rate allocation with a target outage probability.

\subsection{Asymptotic SINR Distribution} \label{sub:distance-2}

The asymptotic CDF expression of the SINR distribution is given by the following theorem.

\begin{Theorem}[Asymptotic CDF of SINR for Distance-based Cell Association] \label{theo:sc1_1}
	For sufficiently large $ L_i^\dagger $ and $ M $, the CDF of the target user's SINR $\gamma_{1,1}^{i+1}$ can be written as 
	\begin{equation}
	\Pr\left[ \gamma_{1,1}^{i+1} < T  \right] \rightarrow Q \bigg[ \bigg(\frac{1}{T (\frac{{ {|{\Phi^\dagger _i}|} }}{{{L_i^\dagger}}} + 1) } - \mathcal{M}_{i+1} \bigg) \sqrt{\frac{1}{\mathcal{V}_{i+1}}} \bigg], \label{eqn:cdf_s1}
	\end{equation}
	where the $ Q $-function is the tail probability of standard normal distribution. $ \mathcal{M}_{i+1} $ and $ \mathcal{V}_{i+1}$ are the mean and variance of $\sum\limits_{\forall \left( {\ell,j} \right) \notin \Phi^\dagger _i } \!\!\left[ {\frac{{{\rho _{\ell,j}}}}{{M{\rho _{1,1}}}} + \frac{{\rho _{\ell,j}^2}}{{{L_i^\dagger} \rho _{1,1}^2}}}\right] $ respectively, which are given by
	\begin{equation}\label{eqn:s1-M}
	\mathcal{M}_{i+1} =  \int\limits_{\bar S_1}\!\! \left[\!\frac{e^{\frac{a^2\theta^2}{2}}}{M}\frac{{|\mathbf{l}|^{-\sigma}}}{{\rho}_{1,1}} \!+\! \frac{e^{2a^2\theta^2}}{{L_i^\dagger}}{(\frac{{{|\mathbf{l}|^{-\sigma} }}}{{\rho}_{1,1} })^2} \!\right]\!\!{\lambda_u(\mathbf{l})} ds(\mathbf{l}),
	\end{equation}
	\begin{equation}\label{eqn:s1-V}
	\mathcal{V}_{i+1} \!=\!\! \int\limits_{{\bar S_1}} \!\!
	{\left[\frac{e^{\frac{a^2\theta^2}{2}}}{M}\frac{{{|\mathbf{l}|^{-\sigma}}}}{{\rho}_{1,1}} \!+\! \frac{e^{2a^2\theta^2 |\mathbf{l}|^{-2\sigma}}}{{L_i^\dagger} {\rho}_{1,1} ^{2}}\right]^{\!2} \!\!\!{\lambda_u(\mathbf{l})} ds(\mathbf{l})}\!-\! \mathcal{M}_{i+1}^2.
	\end{equation}
	Moreover, $a = \frac{\ln 10} {10}$ and $ |\mathbf{l}| $ is the Euclidean norm of 2-dimensional vector $ \mathbf{l} $. $\bar S_1$ denotes the network region except the target cell for $ i \leq d-1 $, and the region where the distance to the target BS is larger than $r_{co}$ for $ i \ge d $. 
\end{Theorem}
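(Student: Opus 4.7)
The plan is to start from the asymptotic SINR expression in Lemma~\ref{lem:sce1} and recast the event $\{\gamma_{1,1}^{i+1} < T\}$ as a tail event on the aggregate interference sum $I \triangleq \sum_{(\ell,j)\notin \Phi^\dagger_i} \bigl[\rho_{\ell,j}/(M\rho_{1,1}) + \rho_{\ell,j}^2/(L_i^\dagger \rho_{1,1}^2)\bigr]$. The multiplicative factor $|\Phi^\dagger_i|/L_i^\dagger + 1$ sitting in front of $I$ in the denominator of (\ref{eqn:lemma_sc1}) is essentially deterministic once the effective set is fixed, so $\{\gamma_{1,1}^{i+1} < T\}$ becomes $\{I > 1/[T(|\Phi^\dagger_i|/L_i^\dagger + 1)]\}$. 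The theorem therefore reduces to characterising the distribution of $I$, and the $Q$-function expression will fall out immediately once we establish a Gaussian approximation for $I$ with mean $\mathcal{M}_{i+1}$ and variance $\mathcal{V}_{i+1}$.

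I would first compute the moments of $I$ using Campbell's theorem for the independently marked Poisson process obtained by attaching the iid log-normal shadowing $\chi_{\ell,j}$ to each user location in $\bar S_1$. Because $\chi = 10^{\zeta/10}$ with $\zeta \sim \mathcal{N}(0,\theta^2)$, evaluating the moment generating function of $\zeta$ at $a = \ln 10/10$ and $2a$ yields $\mathbb{E}[\chi] = e^{a^2\theta^2/2}$ and $\mathbb{E}[\chi^2] = e^{2a^2\theta^2}$; these are precisely the constants that appear in (\ref{eqn:s1-M}). Plugging the intensity $\lambda_u(\mathbf{l})$ together with the per-user expected contribution $\mathbb{E}_\chi\bigl[\chi|\mathbf{l}|^{-\sigma}/(M\rho_{1,1}) + \chi^2|\mathbf{l}|^{-2\sigma}/(L_i^\dagger\rho_{1,1}^2)\bigr]$ into Campbell's first- and second-moment formulas produces the claimed expressions for $\mathcal{M}_{i+1}$ and $\mathcal{V}_{i+1}$; the distance-based cell association forces every user outside $\Phi^\dagger_i$ to lie outside the target cell for $i\le d-1$ and beyond the ball of radius $r_{co}$ for $i\ge d$, making $\bar S_1$ the correct integration domain in both cases.

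The main technical obstacle is establishing the Gaussianity of $I$ rather than computing its moments. My plan is to invoke a central limit theorem for shot noise driven by a PPP: partition $\bar S_1$ into disjoint annuli around the target BS so that the per-annulus contributions are independent by the independence property of the Poisson process, and then verify a Lyapunov-type condition ensuring that no single annulus dominates the total variance. The delicate step is controlling the interaction between the heavy right tail of the log-normal marks and the polynomial pathloss decay $|\mathbf{l}|^{-\sigma}$: the uniform lower bound $\lambda_u^l$ on the density causes the aggregate variance to grow without bound as the network region expands, which makes individual summands negligible in comparison, while all moments of $\chi$ are finite (Gaussian tails in $\log\chi$), so $\mathbb{E}_\chi[f^{2+\delta}]$ remains integrable against $\lambda_u$ whenever $\sigma$ is large enough. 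Once the CLT is in place, substituting $x = 1/[T(|\Phi^\dagger_i|/L_i^\dagger + 1)]$ into $\Pr[I > x] \to Q((x-\mathcal{M}_{i+1})/\sqrt{\mathcal{V}_{i+1}})$ yields (\ref{eqn:cdf_s1}).
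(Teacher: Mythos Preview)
Your reduction of $\{\gamma_{1,1}^{i+1}<T\}$ to a tail event on the shot-noise sum $I$ and your computation of $\mathcal{M}_{i+1}$, $\mathcal{V}_{i+1}$ via Campbell's theorem for the marked PPP are both sound and in fact cleaner than the paper's derivation of the moments.

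There is, however, a genuine gap in your Gaussianity argument. You write that ``the uniform lower bound $\lambda_u^l$ on the density causes the aggregate variance to grow without bound as the network region expands,'' and you lean on this to make individual annulus contributions negligible in the Lyapunov ratio. This is false: because the per-point contribution behaves like $|\mathbf l|^{-\sigma}$ (and its square like $|\mathbf l|^{-2\sigma}$), the second-moment integral $\int_{\bar S_1} f^2(\mathbf l)\lambda_u(\mathbf l)\,ds(\mathbf l)$ converges for any $\sigma>1$; the paper explicitly uses that $\mathcal{M}_{i+1}$ and $\mathcal{V}_{i+1}$ tend to \emph{finite} limits as the outer radius $r\to\infty$. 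With a bounded total variance, your Lyapunov/annulus argument collapses: the innermost annuli carry a non-vanishing fraction of the variance, so you cannot conclude that no block dominates, and a Poisson shot noise with finite variance integral is in general infinitely divisible but not Gaussian.

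The paper sidesteps this by a different mechanism. It truncates $\bar S_1$ to a disk of radius $r$, \emph{conditions} on the number of interfering users $|\Phi_{intf}|$, and exploits the fact that, given the count, the locations are i.i.d.\ with density proportional to $\lambda_u$. The classical i.i.d.\ CLT then applies because the number of terms $|\Phi_{intf}|\sim\int_{\bar S_1}\lambda_u\to\infty$ as $r\to\infty$, even though the limiting variance $\mathcal{V}_{i+1}$ stays finite. A sandwich with $|\Phi_{intf}|=\alpha\!\int\!\lambda_u$ and $|\Phi_{intf}|=\beta\!\int\!\lambda_u$ (whose probabilities tend to $1$) for $\alpha\uparrow1$, $\beta\downarrow1$ then pins the unconditional CDF to the Gaussian expression. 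If you want to keep your direct shot-noise route, you would need to replace the diverging-variance claim by an argument that, on the truncated region, the number of points diverges while each term's second moment is $o(\mathcal{V}_{i+1})$ uniformly --- essentially reproducing the paper's conditioning step in Lindeberg form.
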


\begin{proof}
		Please refer to Appendix B.
\end{proof}

With the knowledge of $\lambda_u$, ${\cal M}_{i+1}$ and ${\cal V}_{i+1}$ can be calculated numerically using \eqref{eqn:s1-M} and \eqref{eqn:s1-V}, and the corresponding distribution of SINR can be obtained. In fact, a more practical way is to learn these two parameters in real time since $\lambda_u({\bf l})$ is usually unknown at the BSs. The online learning method is elaborated in the following section.

\section{Uplink Rate Adaptation}\label{sec:learning}

Note that the uplnik SINR  expression derived in (\ref{eqn:lemma_sc1}) is the function of the large-scale fading coefficients of interfering channels, which is unknown to each service BS. For example, $ \rho_{\ell,j}$ ($ \forall \ell \neq 1$) are usually not measured at the target BS due to large signaling overhead. In fact, not only the accurate value of uplink SINR, but also its distribution derived in Theorem \ref{theo:sc1_1} are hard to predict due to the lack of knowledge on density $ \lambda_{u} $. This raises an issue on robust uplink rate allocation, which will be addressed in the following Section \ref{sub:rate-allocaion} via an online learning algorithm.

\subsection{Learning-based Uplink Rate Adaptation} \label{sub:rate-allocaion}

According to Theorem \ref{theo:sc1_1}, given a target outage probability $ \varepsilon $, the uplink data rate of the target user in the $i$-th data block for both cell association criteria can be scheduled as 
\begin{equation}\label{sch_rate}
r^{i}_{1,1} = \log_2 \left(1 + T^i_{1,1}(\varepsilon) \right), i = 1,2,...,N,
\end{equation}
where $
T^i_{1,1}(\varepsilon) = \frac{1}{(Q^{-1}(\varepsilon)\sqrt{\mathcal{V}_{i}}+ \mathcal{M}_{i})(\frac{{ |\Phi^\dagger_{i-1}| }}{{{L_{i-1}^\dagger}}} + 1)}$, and $Q^{-1}(\cdot)$ represents the inverse of $Q$ function.

Since the user density $ \lambda_{u} $ is unknown, the statistics $\mathcal{M}_{i}$ and $\mathcal{V}_{i}$ in the above equation cannot be directly calculated. We introduce the following online learning algorithm, which collects the information of user distribution and finally converges to the accurate values of  $\mathcal{M}_{i}$ and $\mathcal{V}_{i}$ for the target user.
\begin{Algorithm}[Learning Algorithm for ${\cal M}_i$ and ${\cal V}_i$]\label{Alg_Learning}
	$\{{\cal M}_i, {\cal V}_i | i=1,...,N\}$ can be evaluated iteratively in each frame, where the target user is scheduled in uplink transmission, by the following steps.
\begin{itemize}
	\item {\bf Step 1:} Initialize the values of $ \mathcal{M}_{i} $ and $ \mathcal{V}_{i}$ ($i=1,2,...,N$), denoted as $ \mathcal{M}^0_{i} $ and $ \mathcal{V}^0_{i}$. Let $n = 1$.
	
	\item {\bf Step 2:} In each data block, schedule one silent symbol in the target cell such that no users of the target cell transmits uplink signal. Let $ I_n^i $ be the $n$-th measured uplink interference power in the $i$-th data block to the target user after the processing of data detector $ \mathbf{S}^{i} $. Then, update $ \mathcal{M}_i $ and $ \mathcal{V}_i$ as 
	\begin{equation}
	\mathcal{M}^n_{i} = \frac{n-1}{n} \mathcal{M}^{n-1}_{i} + \frac{1}{n} \frac{I_n^i}{\frac{|\Phi^\dagger_{i-1}|}{{{L_{i-1}^\dagger}}} + 1}, \label{eqn_M_Leran}
	\end{equation}
	\begin{equation}
	\mathcal{V}^n_{i} = \frac{n-2}{n-1} \mathcal{V}^{n-1}_{i} + \frac{1}{n-1} (\frac{I_n^i}{\frac{|\Phi^\dagger_{i-1}|}{{{L_{i-1}^\dagger}}} + 1} - \mathcal{M}^{n-1}_{i})^2. \label{eqn_v_Leran}
	\end{equation}
	\item {\bf Step 3:} Let $n=n+1$ and repeat Step 2 in the frames where the target user is scheduled for uplink transmission, until the iteration converges.
\end{itemize}
\end{Algorithm}
Since $ \frac{I_n}{\frac{|\Phi^\dagger_{i-1}|}{{{L_{i-1}^\dagger}}} + 1} $ is an unbiased observation of $ \mathcal{M}_i $, it  is easy to see that $(
\mathcal{M}_{i}^n, \mathcal{V}_{i}^n) \rightarrow (\mathcal{M}_{i}, \mathcal{V}_{i})$ for $i=1,...,N$ when $ n \rightarrow + \infty$, which is also shown numerically in Section \ref{sec:sim}. 

The above learning algorithm can be applied on all active users to facilitate the robust uplink rate allocation. Let $ \Phi^{\dagger}_i (b) = \Phi^b $ for $ i \leq d-1 $ and $ \Phi^{\dagger}_i (b) = \Phi^b \cup \Phi^b_{co} $ for $ i \geq d $, and
\begin{eqnarray}\label{DefM}
\mathcal{M}_i(b,k) =\mathbb{E} \sum\limits_{\forall (\ell,n) \notin \Phi^\dagger_{i-1}(b)}  \left[  \frac{1}{M} \frac{\rho_{\ell,n}^b}{\rho_{b,k}^b } + \frac{1}{L_{i-1}^\dagger} (\frac{\rho_{\ell,n}^b}{\rho_{b,k}^b })^2\right]
\end{eqnarray}
\begin{eqnarray}\label{DefV}
\mathcal{V}_i(b,k) = Var  \sum\limits_{\forall (\ell,n) \notin \Phi^\dagger_{i-1}(b)}  \left[  \frac{1}{M} \frac{\rho_{\ell,n}^b}{\rho_{b,k}^b } + \frac{1}{L_{i-1}^\dagger} (\frac{\rho_{\ell,n}^b}{\rho_{b,k}^b })^2\right]
\end{eqnarray}
be the statistical parameters for the $ (b,k) $-th user ($\forall b,k$), then the SINR threshold can be written as $ T^i_{b,k} (\epsilon) =  \frac{1}{(Q^{-1}(\varepsilon)\sqrt{\mathcal{V}_{i}(b,k)}+ \mathcal{M}_{i}(b,k))(\frac{{ |\Phi^\dagger_{i-1}(b)| }}{{{L_{i-1}^\dagger}}} + 1)}$, where $ \mathcal{M}_i(b,k) $ and $ \mathcal{V}_i(b,k) $ can be learned with Algorithm \ref{Alg_Learning}. In fact, if $ \mathcal{M}_i $ and $ \mathcal{V}_i $ have been learned at three users of one cell, their values for other users of the same cell can be calculated directly as follows.
\begin{Lemma}\label{Ext_M&V}
	Suppose that $ \mathcal{M}_i $ and $ \mathcal{V}_i $ ($ \forall i $) have been learned for the $ (b,m) $-th, $ (b,j) $-th and $(b,t)$-th users, denoted as ${\cal M}_i(b,m)$, ${\cal V}_i(b,m)$, ${\cal M}_i(b,j)$, ${\cal V}_i(b,j)$ and ${\cal M}_i(b,t)$, ${\cal V}_i(b,t)$ respectively. For the arbitrary $(b,k)$-th user, we have
	\begin{align}\label{DeduceM}
	{{\cal M}_i}(b,k) =& \frac{{{\cal M}_i^2(b,m)(\rho _{b,m}^b)^2 - {\cal M}_i^2(b,j)(\rho _{b,j}^b)^2}}{{{\rho _{b,k}^b}({\rho _{b,m}^b} - {\rho _{b,j}^b})}} \nonumber\\
	&+ \frac{{{\rho _{b,m}^b}{\rho _{b,j}^b}({{\cal M}_i}(b,m){\rho^b_{b,m}} \!\!-\!\! {{\cal M}_i}(b,j){\rho ^b_{b,j}})}}{{(\rho _{b,j}^b)^2({\rho^b_{b,j}} - {\rho^b_{b,m}})}},
	\end{align}
%Moreover, the variance $\mathcal{V}_i(b,k)$ is given by 	
\begin{align}
	{V_i}(b,k) = \left[ {\begin{array}{*{20}{c}}
		{\frac{1}{{{(\rho^b_{b,m})}^2}}}&{\frac{1}{{{(\rho^b_{b,m})}^3}}}&{\frac{1}{{{(\rho^b_{b,m})}^4}}}
		\end{array}} \right]{{\bf{U}}^{ - 1}}{\bf{V}} - {M_i}^2(b,k),\nonumber
\end{align}
where
\begin{align}
{\bf{U}} \!\!= \!\!\left[\!\!\! {\begin{array}{*{20}{c}}
	{\frac{1}{{{(\rho^b_{b,m})}^2}}}&\!\!{\frac{1}{{{(\rho^b_{b,m})}^3}}}&\!\!{\frac{1}{{{(\rho^b_{b,m})}^4}}}\\
	{\frac{1}{{{(\rho^b_{b,j})}^2}}}&\!\!{\frac{1}{{{(\rho^b_{b,j})}^3}}}&\!\!{\frac{1}{{{(\rho^b_{b,j})}^4}}}\\
	{\frac{1}{{{(\rho^b_{b,t})}^2}}}&\!\!{\frac{1}{{{(\rho^b_{b,t})}^3}}}&\!\!{\frac{1}{{{(\rho^b_{b,t})}^4}}}
	\end{array}} \!\!\!\right], {\bf{V}}\!\! =\!\! \left[\!\!\! {\begin{array}{*{20}{c}}
	{{V_i}(b,m) + {M_i}^2(b,m)}\\
	{{V_i}(b,j) + {M_i}^2(b,j)}\\
	{{V_i}(b,t) + {M_i}^2(b,t)}
	\end{array}} \!\!\!\right]\nonumber
\end{align}
\end{Lemma}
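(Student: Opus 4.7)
The key structural observation is that, once the serving cell $b$ is fixed, both $\mathcal{M}_i(b,k)$ and $\mathcal{V}_i(b,k)$ in (\ref{DefM})--(\ref{DefV}) depend on the intra-cell user index $k$ only through the single scalar $\rho^b_{b,k}$: the summation set $\Phi^\dagger_{i-1}(b)$ and the interfering large-scale coefficients $\rho^b_{\ell,n}$ are functions of $b$ alone. Introducing the $k$-independent constants $C_1(b) = \tfrac{1}{M}\mathbb{E}\sum_{(\ell,n)\notin \Phi^\dagger_{i-1}(b)} \rho^b_{\ell,n}$ and $C_2(b) = \tfrac{1}{L^\dagger_{i-1}}\mathbb{E}\sum (\rho^b_{\ell,n})^2$, I obtain $\mathcal{M}_i(b,k) = C_1(b)/\rho^b_{b,k} + C_2(b)/(\rho^b_{b,k})^2$, a quadratic polynomial in $1/\rho^b_{b,k}$ with only two unknowns.

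Substituting $k = m$ and $k = j$ into this expression yields a $2\times 2$ linear system in $(C_1(b), C_2(b))$, uniquely solvable whenever $\rho^b_{b,m}\ne\rho^b_{b,j}$; back-substituting the solution into the general form and simplifying delivers the explicit identity (\ref{DeduceM}) for any third user $(b,k)$.

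For the variance, I would invoke Campbell's second-order formula for the Poisson process $\Pi_u$ marked with i.i.d.\ log-normal shadowings, which asserts that $\text{Var}[\sum_x g(x)] = \mathbb{E}\sum_x g(x)^2$ for summations of independent contributions over the PPP points. Applying it to $g(\ell,n) = \rho^b_{\ell,n}/(M\rho^b_{b,k}) + (\rho^b_{\ell,n})^2/(L^\dagger_{i-1}(\rho^b_{b,k})^2)$ and expanding the square produces $\mathcal{V}_i(b,k) = D_1(b)/(\rho^b_{b,k})^2 + D_2(b)/(\rho^b_{b,k})^3 + D_3(b)/(\rho^b_{b,k})^4$, where $D_1, D_2, D_3$ again depend on $b$ only. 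Because $\mathcal{M}_i^2(b,k)$ is itself a polynomial in $1/\rho^b_{b,k}$ supported on the powers $2, 3, 4$, the second moment $\mathcal{V}_i(b,k) + \mathcal{M}_i^2(b,k)$ is a linear combination of $(\rho^b_{b,k})^{-2}, (\rho^b_{b,k})^{-3}, (\rho^b_{b,k})^{-4}$ with exactly three unknown coefficients.

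Evaluating this combination at $k = m, j, t$ produces a $3\times 3$ linear system equivalent to $\mathbf{U}\mathbf{c} = \mathbf{V}$ in the lemma's notation, so the unknowns are recovered as $\mathbf{c} = \mathbf{U}^{-1}\mathbf{V}$; plugging them back into the general form yields $\mathcal{V}_i(b,k) + \mathcal{M}_i^2(b,k)$, and subtracting $\mathcal{M}_i^2(b,k)$ (already obtained from the first step) gives $\mathcal{V}_i(b,k)$. The main technical obstacle is justifying Campbell's identity on the shadowed PPP restricted to the exterior of the serving (and cooperating) cells; this follows from the independent-marks property of the Poisson process $\Pi_u$ restricted to the relevant region. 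Invertibility of $\mathbf{U}$ is a mild Vandermonde-type requirement that the three reference values $\rho^b_{b,m}, \rho^b_{b,j}, \rho^b_{b,t}$ be distinct, which holds generically whenever the three reference users occupy distinct positions within cell $b$.
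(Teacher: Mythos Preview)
Your proposal is correct and follows the same skeleton as the paper's Appendix~C: observe that $\mathcal{M}_i(b,k)$ and the second moment $\mathcal{V}_i(b,k)+\mathcal{M}_i^2(b,k)$ depend on the user index $k$ only through $\rho^b_{b,k}$, as polynomials in $1/\rho^b_{b,k}$ with two and three unknown $b$-dependent coefficients respectively, then solve the resulting linear systems from the reference users $m,j$ (and $t$). The paper carries this out explicitly for the mean---writing down the solved values of $\sum\mathbb{E}[\rho_{\ell,n}]$ and $\sum\mathbb{E}[\rho_{\ell,n}^2]$ from the two equations---and dismisses the variance with ``similarly''.

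The one place you take a slightly different path is the variance step: you invoke Campbell's second-order identity to show $\mathcal{V}_i(b,k)$ itself has the degree-$(2,3,4)$ polynomial form, then add $\mathcal{M}_i^2$. This is valid but unnecessary. Writing the random interference sum as $A/\rho^b_{b,k}+B/(\rho^b_{b,k})^2$ with $A=\tfrac{1}{M}\sum\rho^b_{\ell,n}$ and $B=\tfrac{1}{L^\dagger_{i-1}}\sum(\rho^b_{\ell,n})^2$ independent of $k$, the second moment expands directly as $\mathbb{E}[A^2]/\rho^2+2\mathbb{E}[AB]/\rho^3+\mathbb{E}[B^2]/\rho^4$, giving the three-coefficient form without any appeal to Poisson structure. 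This dissolves the ``technical obstacle'' you flag about the marked, restricted PPP, and in fact shows the lemma holds for \emph{any} interfering point process, not just an SPPP. Your remark on the Vandermonde-type invertibility of $\mathbf{U}$ (distinct $\rho^b_{b,m},\rho^b_{b,j},\rho^b_{b,t}$) is a useful addition that the paper leaves implicit.
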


\begin{proof}
	Please refer to Appendix C.
\end{proof}

\section{Simulation and Discussion}\label{sec:sim}
In this section, we demonstrate the performance of the proposed scheme by numerical simulations, and corroborate the analytical results derived in the previous sections. As a comparison, we also show the simulation results for the data-assisted uplink scheme in \cite{Ray:15} (denoted as the baseline scheme), where there is no BS cooperation. In the simulation, a hexagonal cellular network is considered. The cell radius is $ R=500$ m, and the pathloss exponent is $\sigma=3.76$. The standard deviation of shadowing is  $\theta = 3$ dB. The number of BS antennas is $M=200$. We assume $P = 23$ dBm, the thermal noise density is $-174$ dBm/Hz, and the bandwidth is $5$ MHz. Let $ r_{1,1}=|{\bf l}_{1,1}| $ be the distance from the target user to the target BS. We consider the performance of the target user when it is at the cell center ($ r_{1,1} = 100$ m), middle distance ($ r_{1,1} = 300$ m), and cell edge ($ r_{1,1} = 400$ m), respectively. The pilot is the shifted Zadoff-Chu sequence used in the LTE systems. $r_{co}$ is set to be $ 700$ m for the proposed scheme. The uplink data symbols are divided into $5$ blocks, each with $100$ data symbols, and the pilot length $ L_p$ is $ 31 $.

In Fig. \ref{fig:comp_scm2}, both analytical performance and numerical results are demonstrated for the proposed scheme, where the backhaul delay $d=1$. The average number of users per cell is $ 10 $. The CDFs of SINR distributions for different locations of the target user in the last ($5$-th) data block are plotted. In Fig. \ref{fig:comp_scm2}, the analytical performance is generated from \eqref{eqn:cdf_s1}. We can observe that the analytical results fit the numerical results tightly.

\begin{figure}
	\centering
	\includegraphics[height = 4cm, width=5.6cm]{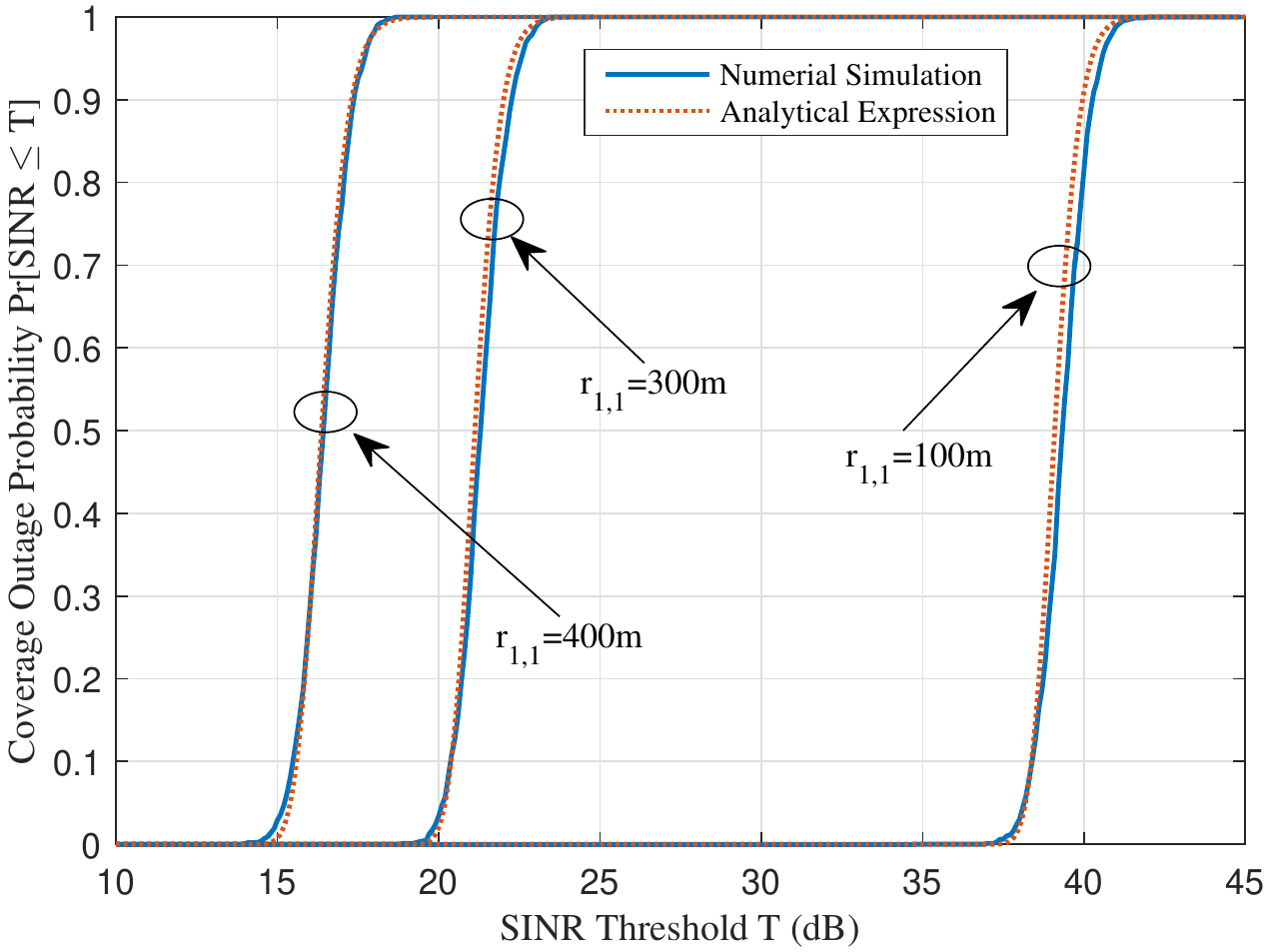}
	\caption{CDFs of the uplink SINR at the $5$-th block for the proposed scheme,  where $ M\!\!\!=\!\!200$, $ r_{1,1}\!\!=\!\!100$ m, $300$ m, $400$ m. }\label{fig:comp_scm2}
\end{figure}

\begin{figure}	
	\begin{subfigure} {0.5\textwidth}
		\centering
		\includegraphics[ width=5.6cm]{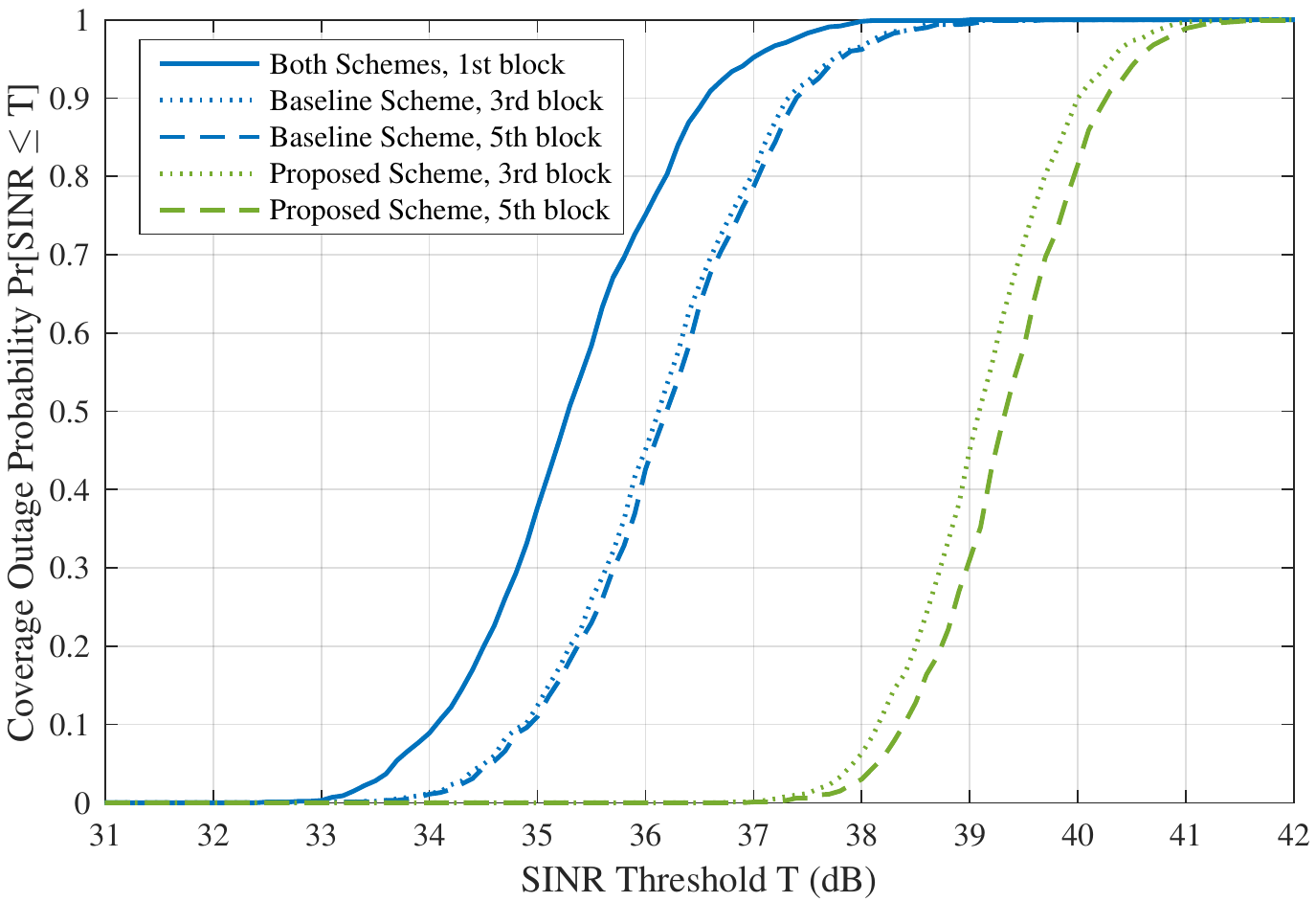}
		\centering \caption{$ r_{1,1}=100 $m}\label{fig:comp_center}
	\end{subfigure}
	\begin{subfigure} {0.5\textwidth}
		\centering
		\includegraphics[width=5.6cm]{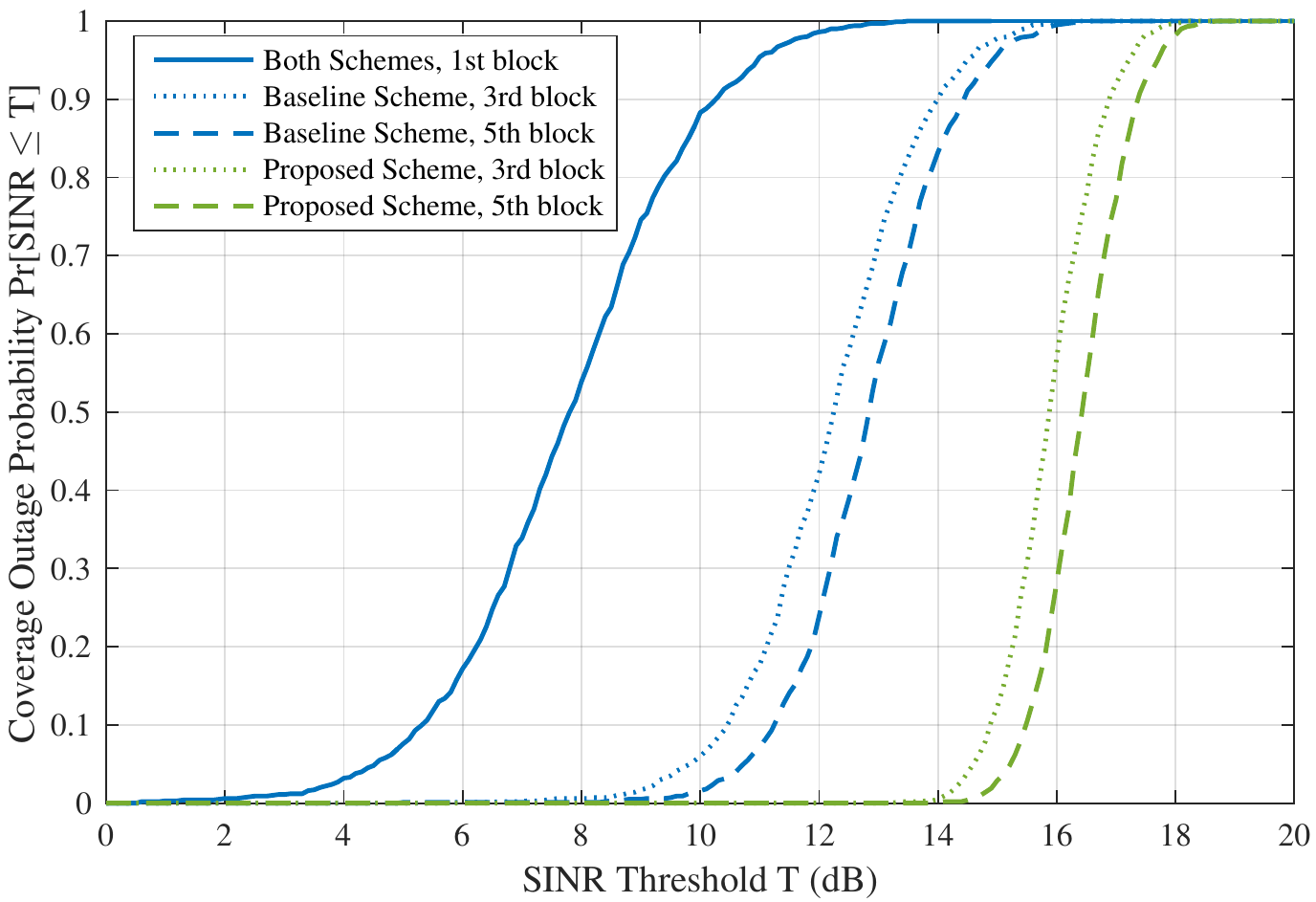}
		\centering \caption{$ r_{1,1}=400$m}\label{fig:comp_edge}
	\end{subfigure}
	\caption{SINR distributions of the target user with the proposed schemes, where the distributions of the $1$-st, $3$-rd and 5th blocks are plotted, $ M=200 $, and $ r_{1,1}= 100$ m, $400$ m.}\label{fig:comp_cdf}
\end{figure}

The performance of the two data-assisted schemes (proposed scheme and baseline scheme) is compared in Fig. \ref{fig:comp_cdf}(a) and \ref{fig:comp_cdf}(b) for cell center region $r_{1,1}=100$ m and cell edge region $ r_{1,1}=400$ m, respectively, where the backhaul delay $d=1$. The SINR distributions of the $1$-st, $3$-rd, $5$-th data blocks are plotted, respectively. The average number of users per cell is $ 10 $. The $ 1 $-st block's SINR distribution of the both schemes demonstrates the performance of massive MIMO system without data-assisted detection. Taking the $ 3 $-rd block of the baseline scheme as an example, when data symbols are used as equivalent pilot sequence, there are around $ 1 $dB and $ 5 $dB gains for the cell-center and cell-edge users, respectively. Hence, the data-assisted detection scheme benefits more on cell-edge users. This is because the cell-edge users suffer more on the inter-cell interference. In both figures, our proposed scheme outperforms the baseline scheme. This is because the pilot information and data symbols of the closest interfering users are used for channel estimation in the proposed scheme.  

%\begin{figure}	
%	\begin{subfigure} {0.5\textwidth}
%		\centering
%		\includegraphics[width=5.6cm]{blockidx_U10.pdf}
%		\centering \caption{The average number of users per cell is 10}\label{fig:block_U10}
%	\end{subfigure}
%	\begin{subfigure} {0.5\textwidth}
%		\centering
%	    \includegraphics[width=5.6cm]{blockidx_U20.pdf}
%		\centering \caption{The average number of users per cell is 20}\label{fig: block_U20}
%	\end{subfigure}
%	\caption{Uplink SINR of different blocks, where the SINR is chosen with coverage outage probability = $ 0.1\% $, $ M=200 $, $ r_{1,1}=100$m, $400$m.}\label{fig:block}
%\end{figure} 

\begin{figure}
	\centering
	\includegraphics[height = 4.5cm, width=6.6cm]{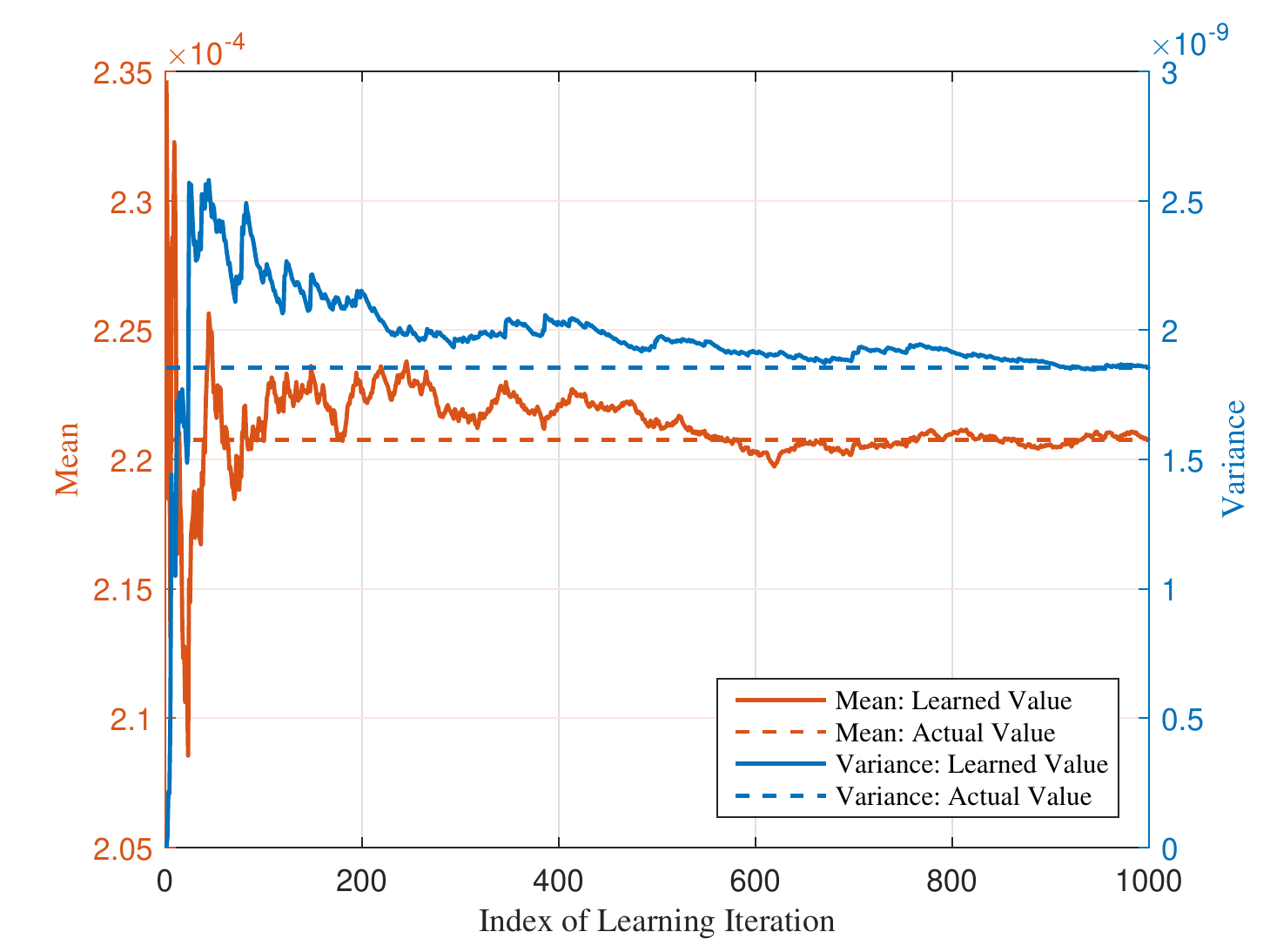}
	\caption{Online learning procedure of Theorem \ref{theo:sc1_1} at the $5$-th block,  where $M=200$, $ r_{1,1}=50$ m.}\label{fig:learn}
\end{figure}

%We further show the uplink SINR of each data block in Fig. \ref{fig:block}, where  the backhaul delay $d=1$, the SINR is chosen with coverage outage probability = $ 0.1\% $, and the average number of users per cell is $10$ and $20$, respectively. It is shown that regardless of users density, the proposed data-assisted approaches can effectively improve the ulplink SINR. Both the proposed scheme and baseline scheme could bring nearly $5$dB, $10$dB gains for the $3$-rd data block of cell-edge user respectively, compared with the conventional uplink transmission scheme without data-assisted channel estimation. Moreover, for the $4$-th and $5$-th data blocks, their SINR gains compared with the $3$-rd data block is negligible. Thus it is not necessary to involve a significant number of data symbols in the data-assisted channel estimation scheme.

In Fig. \ref{fig:learn}, we check the convergence of online learning algorithm, where the learning process of mean and variance in \eqref{eqn:cdf_s1} is demonstrated. The target user is $ 50 $ m away from the service BS. The average number of users per cell is $ 10 $. We can observe that after nearly two hundred times learning iterations, the mean and variance of interference power are close to the actual values.

\section{Summary}\label{sec:sum}
In this paper, a cooperative uplink transmission and detection scheme for massive MIMO networks is proposed and analyzed. Based on it, learning-based algorithms are proposed to adapt the uplink data rate without the knowledge of active user distribution density. It is shown by simulations that the proposed scheme can significantly improve the uplink SINRs of the back data blocks in each uplink frame. 

\section*{Appendix A: Proof of Lemma \ref{lem:sce1}}
We first consider the case of $i \leq d-1$.
Note that both $ \Delta \mathbf{R}_{1,k}^i $ and $ \mathbf{R}_{\ell,j} $ are scaled identity matrices, we could define $$ \alpha \mathbf{I} = \sum_{\forall k} \Delta \mathbf{R}_{1,k}^i  + \sum_{\forall \ell\neq 1, j} \mathbf{R}_{\ell,j} + \textbf{I}/P.$$ Moreover, we define the following notations for convenience:
\begin{align}
\mathbf{A}_k =& [ \sum_{\forall j \neq k} \widehat{\mathbf{h}}^i_{1,j} (\widehat{\mathbf{h}}^i_{1,j})^H + \alpha \mathbf{I} ]^{-1}, \nonumber\\
\mathbf{B}_{k,j} =& [ \sum_{\forall m \neq \{k, j\}} \widehat{\mathbf{h}}^i_{1,m} (\widehat{\mathbf{h}}^i_{1,m})^H + \alpha \mathbf{I} ]^{-1}. \nonumber
\end{align}
With the matrix inversion lemma, $ \mathbf{\Psi}^i_{in} $ can be written as
\begin{equation}
\mathbf{\Psi}^i_{in} = \alpha^{-1} \mathbf{I} - \alpha^{-2} \widehat{\mathbf{H}}_{1}^{i} \bigg( \mathbf{I} + \alpha^{-1} (\widehat{\mathbf{H}}_{1}^{i})^H \widehat{\mathbf{H}}_{1}^{i} \bigg)^{-1} (\widehat{\mathbf{H}}_{1}^{i})^H \label{eqn:phi}.
\end{equation}
Hence, we have the following asymptotic expressions for (\ref{eqn:ul-sinr}).

\textbullet \ The signal term:
\begin{align}
&|(\widehat{\mathbf{h}}_{1,1}^i)^H \mathbf{\Psi}^{i}_{in} \widehat{\mathbf{h}}_{1,1}^i|^2\nonumber\\
=& \bigg| \alpha^{-1} (\widehat{\mathbf{h}}_{1,1}^i)^H \widehat{\mathbf{h}}_{1,1}^i- \alpha^{-2} (\widehat{\mathbf{h}}_{1,1}^i)^H \widehat{\mathbf{H}}_{1}^{i} \nonumber\\
&\times \bigg( \mathbf{I} + \alpha^{-1} (\widehat{\mathbf{H}}_{1}^{i})^H \widehat{\mathbf{H}}_{1}^{i} \bigg)^{-1} (\widehat{\mathbf{H}}_{1}^{i})^H   \widehat{\mathbf{h}}_{1,1}^i \bigg|^2 \nonumber\\
\rightarrow& \bigg| \alpha^{-1} (\widehat{\mathbf{h}}_{1,1}^i)^H \widehat{\mathbf{h}}_{1,1}^i - \alpha^{-2} \frac{|(\widehat{\mathbf{h}}_{1,1}^i)^H \widehat{\mathbf{h}}_{1,1}^i|^2}{1 + \alpha^{-1} (\widehat{\mathbf{h}}_{1,1}^i)^H \widehat{\mathbf{h}}_{1,1}^i} \bigg|^2 
\rightarrow 1 \nonumber,
\end{align}
where the first step is due to (\ref{eqn:phi}); the second step is because $ |(\widehat{\mathbf{h}}_{1,1}^i)^H \widehat{\mathbf{h}}_{1,1}^i| >> |(\widehat{\mathbf{h}}_{1,1}^i)^H \widehat{\mathbf{h}}_{1,j}^i| $ ($ \forall j \neq 1 $) for sufficiently large $ M $ and $ L_i $, and $ \bigg( \mathbf{I} + \alpha^{-1} (\widehat{\mathbf{H}}_{1}^{i})^H \widehat{\mathbf{H}}_{1}^{i} \bigg) $ can be approximated as a diagonal matrix; the last step is because of $ \alpha^{-1} (\widehat{\mathbf{h}}_{1,1}^i)^H \widehat{\mathbf{h}}_{1,1}^i >>1 $ for sufficiently large $ M $.

\textbullet \ Interference term I:
\begin{eqnarray}
&&\sum_{\forall j \neq 1}|(\widehat{\mathbf{h}}_{1,1}^i)^H \mathbf{\Psi}^{i}_{in} \widehat{\mathbf{h}}_{1,j}^i|^2 \nonumber\\
& = & \sum_{\forall j \neq 1} \bigg| \frac{(\widehat{\mathbf{h}}_{1,1}^i)^H \mathbf{B}_{1,j} \widehat{\mathbf{h}}_{1,j}^i}{ [1 + (\widehat{\mathbf{h}}_{1,1}^i)^H \mathbf{A}_{1} \widehat{\mathbf{h}}_{1,1}^i ] [1 + (\widehat{\mathbf{h}}_{1,j}^i)^H \mathbf{B}_{1,j} \widehat{\mathbf{h}}_{1,j}^i ] }\bigg|^2 \nonumber \\
& \approx & \sum_{\forall j \neq 1} \bigg| \frac{(\widehat{\mathbf{h}}_{1,1}^i)^H \mathbf{B}_{1,j} \widehat{\mathbf{h}}_{1,j}^i}{ \alpha^{-2} ||\widehat{\mathbf{h}}_{1,1}||^2 ||\widehat{\mathbf{h}}_{1,j}||^2}\bigg|^2 \sim \mathcal{O}(\frac{1}{M^2}), \nonumber
\end{eqnarray}
where the first step is directly from the Lemma 1 of \cite{Hoydis:13}, and the last step is because the numerator and the denominator of $ \bigg| \frac{(\widehat{\mathbf{h}}_{1,1}^i)^H \mathbf{B}_{1,j} \widehat{\mathbf{h}}_{1,j}^i}{ \alpha^{-2} ||\widehat{\mathbf{h}}_{1,1}||^2 ||\widehat{\mathbf{h}}_{1,j}||^2}\bigg| $ are of the order $ M $ and $ M^2 $ respectively. To prove the second step of the above derivation, we define $ \widehat{\mathbf{C}}_{1,1}^i = [\widehat{\mathbf{h}}_{1,2}^i \widehat{\mathbf{h}}_{1,3}^i \cdots \widehat{\mathbf{h}}_{1,|\Phi^{(1)}|}^i] $. Hence,
\begin{align}
1 + (\widehat{\mathbf{h}}_{1,1}^i)^H \mathbf{A}_{1} \widehat{\mathbf{h}}_{1,1}^i 
=&  1+ \alpha^{-1} (\widehat{\mathbf{h}}_{1,1}^i)^H \widehat{\mathbf{h}}_{1,1}^i - \alpha^{-2} (\widehat{\mathbf{h}}_{1,1}^i )^H \nonumber\\
&\widehat{\mathbf{C}}_{1,1}^i\! \bigg[\! \mathbf{I}\! +\! \alpha^{\!-1}\!(\widehat{\mathbf{C}}_{1,1}^i)^H  \widehat{\mathbf{C}}_{1,1}^i  \!\bigg]^{\!-1}\!\!\!\! (\widehat{\mathbf{C}}_{1,1}^i)^{\!H}  \widehat{\mathbf{h}}_{1,1}^i \nonumber\\
\approx& \alpha^{-1} (\widehat{\mathbf{h}}_{1,1}^i)^H \widehat{\mathbf{h}}_{1,1}^i \nonumber,
\end{align}
and similarly $$
1 + (\widehat{\mathbf{h}}_{1,j}^i)^H \mathbf{B}_{1,j} \widehat{\mathbf{h}}_{1,j}^i \approx \alpha^{-1} (\widehat{\mathbf{h}}_{1,j}^i)^H \widehat{\mathbf{h}}_{1,j}^i.$$

\textbullet \ Interference term III:

The intra-cell interference can be written as
\begin{eqnarray}
&&\mathbb{E} \bigg[\sum_{(\ell,j) \notin \Phi^{1}} |(\widehat{\mathbf{h}}_{1,1}^i)^H \mathbf{\Psi}^{i}_{in} \mathbf{h}_{\ell,j}|^2  \bigg| \widehat{\mathbf{H}}_{1}^i\bigg]  \nonumber\\
&\overset{(a)}{=}& \mathbb{E} \bigg[\sum_{(\ell,j) \notin \Phi^{1}} \bigg| \alpha^{-1}(\widehat{\mathbf{h}}_{1,1}^i)^H \mathbf{h}_{\ell,j}    - \alpha^{-2} (\widehat{\mathbf{h}}_{1,1}^i)^H \widehat{\mathbf{H}}_{1}^{i}\nonumber\\
&&\times \bigg( \mathbf{I} + \alpha^{-1} (\widehat{\mathbf{H}}_{1}^{i})^H \widehat{\mathbf{H}}_{1}^{i} \bigg)^{-1} (\widehat{\mathbf{H}}_{1}^{i})^H  \mathbf{h}_{\ell,j} \bigg|^2 \bigg| \widehat{\mathbf{H}}_{1}^i\bigg] \nonumber\\
& \overset{(b)}{\approx} &  \mathbb{E} \bigg[\sum_{(\ell,j) \notin \Phi^{1}} \bigg| \alpha^{-1}(\widehat{\mathbf{h}}_{1,1}^i)^H \mathbf{h}_{\ell,j}  - \alpha^{-2} (\widehat{\mathbf{h}}_{1,1}^i)^H \widehat{\mathbf{h}}_{1,1}^{i}\nonumber\\
&&\times \bigg( 1 + \alpha^{-1} (\widehat{\mathbf{h}}_{1,1}^{i})^H \widehat{\mathbf{h}}_{1,1}^{i} \bigg)^{-1} (\widehat{\mathbf{h}}_{1,1}^{i})^H  \mathbf{h}_{\ell,j} \bigg|^2 \bigg| \widehat{\mathbf{H}}_{1}^i\bigg] \nonumber\\
&\overset{(c)}{\rightarrow}&   \frac{1}{ \rho_{1,1}^2}\mathbb{E} \bigg[ \sum_{(\ell,j) \notin \Phi^{1}} \bigg|  (\widehat{\mathbf{h}}_{1,1}^i)^H  \mathbf{h}_{\ell,j} / M\bigg|^2 \ \bigg| \widehat{\mathbf{H}}_{1}^i\bigg] \nonumber\\
&\approx& \frac{1}{ \rho_{1,1}^2} \sum_{(\ell,j) \notin \Phi^{1}} \bigg\{ \mathbb{E}\bigg|\mathbf{h}_{1,1}^H \mathbf{h}_{\ell,j}/M \bigg|^2 + \bigg[ ||\mathbf{q}_{1}^i||^2P\rho_{\ell,j}^2 \bigg] \bigg\}\nonumber\\
&\overset{(d)}{\approx}& \frac{1}{\rho_{1,1}^2} \!\!\sum_{(\ell,j) \notin \Phi^{1}}\!\! \bigg\{ \frac{\rho_{1,1} \rho_{\ell,j}}{M} \!+\! P\rho_{\ell,j}^2 \bigg[ \frac{1}{||\mathbf{x}_{1,1}^{0,i}||^2} \bigg] \bigg\} \nonumber\\
&\overset{(e)}\approx&   \frac{1}{\rho_{1,1}^2} \sum_{(\ell,j) \notin \Phi^{1}} \!\!\bigg\{ \frac{\rho_{1,1} \rho_{\ell,j}}{M} \!+\! \frac{\rho_{\ell,j}^2}{(L_i -1)}  \bigg\}, \label{equ:inter_term}
\end{eqnarray}
where
\begin{eqnarray}
||\mathbf{q}_k^i||^2
&\approx&\bigg[  \frac{||\mathbf{x}_{1,k}^{0,i}||}{\frac{P\sum_{\ell\neq 1} \rho_{\ell,j}}{\rho_{1,1}}+ ||\mathbf{x}_{1,k}^{0,i}||^2} \bigg]^2 \approx \frac{1}{||\mathbf{x}_{1,k}^{0,i}||^2}. \label{equ:estimator_term}
\end{eqnarray}
Step (a) is due to (\ref{eqn:phi}), step (b) is because of $ |(\widehat{\mathbf{h}}_{1,1}^i)^H \widehat{\mathbf{h}}_{1,1}^i| >> |(\widehat{\mathbf{h}}_{1,1}^i)^H \widehat{\mathbf{h}}_{1,j}^i| $ ($ \forall j \neq 1 $) for sufficiently large BS's antennas $ M $, step (c) is due to $ \alpha^{-1} (\widehat{\mathbf{h}}_{1,1}^i)^H \widehat{\mathbf{h}}_{1,1}^i >> 1$ and $ (\widehat{\mathbf{h}}_{1,1}^i)^H \widehat{\mathbf{h}}_{1,1}^i/M \rightarrow \rho_{1,1} $ for sufficiently large $ L_i $ and $ M $. For sufficiently large $L_i$, the channel estimation error in $\widehat{\mathbf{h}}_{1,1}^i$ is much smaller than its true value. The step (d) is due to equation (\ref{equ:estimator_term}), and step (e) is because of the mean of inverse-Chi-squared distribution.

\setcounter{equation}{31}
\textbullet \ Interference term II:
\begin{eqnarray}
&&\mathbb{E}\bigg[\sum\limits_{\forall k} | {(\widehat {\bf{h}}_{1,1}^i)^H}{\bf{\Psi }}_{in}^i\Delta {\bf{h}}_{1,k}^i{|^2}|\widehat {\bf{H}}_1^i\bigg] \nonumber\\ 
&=&\mathbb{E}\bigg[\sum\limits_{\forall k} | {(\widehat {\bf{h}}_{1,1}^i)^H}{\bf{\Psi }}_{in}^i{{\bf{h}}_{1,k}}\left( {{\bf{x}}_{1,k}^{0,i}{\bf{q}}_k^i - 1} \right)\nonumber\\
&&+\sum\limits_{(\ell,j) \notin {\Phi^1}} {{{(\widehat {\bf{h}}_{1,1}^i)}^H}{{\bf{h}}_{\ell,j}}{\bf{x}}_{\ell,j}^{0,i}{\bf{q}}_k^i} {|^2}|\widehat {\bf{H}}_1^i\bigg] \nonumber\\
&\approx &\mathbb{E}\bigg[\sum\limits_{\forall k} {P{{\left\| {{\bf{q}}_k^i} \right\|}^2}\sum\limits_{(\ell,j) \notin {\Phi ^1}} {|{{(\widehat {\bf{h}}_{1,1}^i)}^H}{\bf{\Psi }}_{in}^i{{\bf{h}}_{\ell,j}}} {|^2}} |\widehat {\bf{H}}_1^i\bigg], \nonumber
\end{eqnarray}
where the approximation is due to ${\bf{x}}_{1,k}^{0,i}{\bf{q}}_k^i \approx 1$. From (\ref{equ:inter_term}) and (\ref{equ:estimator_term}), we have
\begin{eqnarray}
&&\mathbb{E}\bigg[\sum\limits_{\forall k} | {(\widehat {\bf{h}}_{1,1}^i)^H}{\bf{\Psi }}_{in}^i\Delta {\bf{h}}_{1,k}^i{|^2}|\widehat {\bf{H}}_1^i\bigg]\nonumber\\
&\approx&  \!\!\sum\limits_{\forall k}\frac{P}{{{{\left\| {{\bf{x}}_{1,k}^{0,i}} \right\|}^2}}}\sum\limits_{(\ell,j) \notin {\Phi^1}}  [ \frac{{{\rho _{\ell,j}}}}{{\rho _{1,1}M}} + \frac{{\rho _{\ell,j}^2}}{{\rho _{1,1}^2({L_i} - 1)}}] \nonumber\\
&=&  \frac{|\Phi^1|}{{({L_i} - 1)}}\sum\limits_{(\ell,j) \notin {\Phi^1}} [ \frac{{{\rho _{\ell,j}}}}{{\rho _{1,1}M}} + \frac{{\rho _{\ell,j}^2}}{{\rho _{1,1}^2({L_i} - 1)}} ]. \label{equ:intra_term}
\end{eqnarray}

\textbullet \ Noise term: $
\frac{\left\|(\widehat{\mathbf{h}}_{1,1}^{i})^H \mathbf{\Psi}^{i}_{in}\right\|^2}{P} \approx 
\frac{1}{P}{\left\| {\frac{{{\alpha ^{ - 1}}{{(\widehat {\bf{h}}_{1,1}^i)}^H}}}{{1 + {\alpha ^{ - 1}}{{(\widehat {\bf{h}}_{1,1}^i)}^H}\widehat {\bf{h}}_{1,1}^i}}} \right\|^2}.$
When $L_i$ is sufficiently large, the noise item can be approximated by $\frac{1}{{PM{\rho _{1,1}}}}$.

Note that for high SNR region and sufficiently large $ M $, the noise item and interference term I are negligible compared with the interference term II and III. As $ L_i \approx  L_i -1 $, the asymptotic expression of  $ \gamma_{1,1}^{i+1} $ can be derived from (\ref{equ:inter_term}) and (\ref{equ:intra_term}). 
When $i \ge d $, the proof is similar, due to page limitation, it is omitted here.

\section*{Appendix B: Proof of Theorem \ref{theo:sc1_1}}

Due to page limitation, we only provide the sketch of the proof for $i \leq d-1$, it can be similarly applied on the case of $i \ge d$. The region of the interfering users $\bar S_1$ can be viewed as a disk, whose radius $ r $ tends to infinity, except the coverage of target cell. Hence in the following, we first consider $\bar S_1$ as a disk with fixed radius, and then let $ r\rightarrow \infty $. Denote the set of users in $ \bar S_1 $ as $\Phi_{intf}$. Let $ \alpha \leq 1 $ and $ \beta \geq 1 $, we have
\begin{eqnarray}
&&{P_{\alpha , < }}\mathop  = \limits^{def} \Pr [|{\Phi _{intf}}| < \alpha \int\limits_{{{\bar S}_1}} {{\lambda _u}({\bf{l}})}ds{({\bf l})} ] \nonumber\\
=&& \frac{{\Gamma (\lfloor \alpha \int\limits_{{{\bar S}_1}} {{\lambda _u}({\bf{l}})ds{({\bf l})}}  + 1 \rfloor,\int\limits_{{{\bar S}_1}} {{\lambda _u}({\bf{l}})}ds{({\bf l})} )}}{{\lfloor \alpha \int\limits_{{{\bar S}_1}} {{\lambda _u}({\bf{l}})ds{({\bf l})}}\rfloor !}},\nonumber
\end{eqnarray}
\begin{eqnarray}
&&P_{\beta,>} \overset{def}{=} \Pr[|\Phi_{intf}| > \beta \int\limits_{{{\bar S}_1}} {{\lambda _u}({\bf{l}})ds{({\bf l})}} ]\nonumber\\
=&&1-\frac{\Gamma \bigg(\lfloor  \beta \int\limits_{{{\bar S}_1}} {{\lambda _u}({\bf{l}})ds{({\bf l})}}+1 \rfloor, \int\limits_{{{\bar S}_1}} {{\lambda _u}({\bf{l}})ds{({\bf l})}}\bigg)}{\lfloor \beta \int\limits_{{{\bar S}_1}} {{\lambda _u}({\bf{l}})ds{({\bf l})}} \rfloor!} ,\nonumber
\end{eqnarray}
where $ \Gamma(.) $ is the incomplete gamma function. Let $ \mathcal{E}_{intf} = \sum\limits_{(j,m) \in \Phi_{intf}}  \frac{1}{M} \frac{\rho_{j,m}}{\rho_{1,1} } + \frac{1}{L_i} (\frac{\rho_{j,m}}{\rho_{1,1}})^2$ and $T^{\prime}$ = ${T (\frac{{ {\Phi^\dagger_i} }}{{{L_i}}} + 1) }$, it is with probability $ 1-P_{\alpha,<} $ and $ 1- P_{\beta,>}$ respectively that the following equations can hold 
\begin{equation}
\Pr\bigg[ \mathcal{E}_{intf} > \frac{1}{T^{\prime}} \bigg| |\Phi_{intf}|=\alpha \int\limits_{{{\bar S}_1}} {{\lambda _u}({\bf{l}})ds{({\bf l})}} \bigg] \leq \Pr\bigg[ \mathcal{E}_{intf} > \frac{1}{T^{'}} \bigg] \nonumber,
\end{equation}
\begin{equation}
\Pr\bigg[\mathcal{E}_{intf} > \frac{1}{T^{\prime}}\bigg]  \leq  \Pr\bigg[ \mathcal{E}_{intf} > \frac{1}{T^{\prime}}\bigg| |\Phi_{intf}|=\beta \int\limits_{{{\bar S}_1}} {{\lambda _u}({\bf{l}})ds{({\bf l})}} \bigg] \nonumber.
\end{equation}
%\begin{figure}
%	\centering
%	\includegraphics[scale=0.28]{Proof_Theo1.pdf}
%	\caption{Illustration of the proof of Theorem 1.}\label{fig:proof1}
%\end{figure}

When $\rho_{1,1}$ is given, $ \frac{1}{M} \frac{\rho_{j,m}}{\rho_{1,1} } + \frac{1}{L_i} (\frac{\rho_{j,m}}{\rho_{1,1} })^2, \forall (j,m) \in \Phi_{intf} $ could be consider as i.i.d. distributed random variables. Given the condition $ |\Phi_{intf}|=\alpha \int\limits_{{{\bar S}_1}} {{\lambda _u}({\bf{l}})ds{({\bf l})}}$, the above lower bound can be further approximated by applying the Central Limit Theorem as 
\begin{align}
&\Pr\bigg[ \mathcal{E}_{intf} > \frac{1}{T^{\prime}} \bigg| |\Phi_{intf}|=\alpha \int\limits_{{{\bar S}_1}} {{\lambda _u}({\bf{l}})ds{({\bf l})}} \bigg] \nonumber\\
=& Q \bigg[ \bigg(\frac{1}{T^{\prime}} - \alpha \mathcal{M}_{i+1} \bigg) \sqrt{\frac{1}{\alpha \mathcal{V}_{i+1}}} \bigg]\nonumber.
\end{align}
where $\mathcal{M}_{i+1}$ and $\mathcal{V}_{i+1}$ are derived by the following integral expressions
\begin{equation}
\mathcal{M}_{i+1} =  \int\limits_{{\bar S_1}} \left[{\frac{e^{\frac{a^2\theta^2}{2}}}{M}\frac{{{|{\bf l}|^{-\sigma}}}}{{\rho}_{1,1}} + \frac{e^{2a^2\theta^2}}{{L_i}}{{(\frac{{{|{\bf l}|^{-\sigma} }}}{{\rho}_{1,1} })}^2}}\right] {{\lambda _u}({\bf{l}})}ds{({\bf l})}, \nonumber
\end{equation}
\begin{equation}
\mathcal{V}_{i+1} =\!\! \int\limits_{{\bar S_1}}\!\! 
\left[\frac{e^{\frac{a^2\theta^2}{2}}}{M}\frac{{{|{\bf l}|^{-\sigma}}}}{{\rho}_{1,1}}\! +\! \frac{e^{2a^2\theta^2 }}{{L_i}}{{(\frac{{{|{\bf l}|^{-\sigma} }}}{{\rho}_{1,1} })}^2}\right]^2\!\!\!{{\lambda _u}({\bf{l}})}ds{({\bf l})} - \mathcal{M}_{i+1}^2. \nonumber
\end{equation}
When $r \rightarrow \infty $, however, $\mathcal{M}_{i+1}$ and $\mathcal{V}_{i+1}$ will converge to finite values.
Moreover, $ P_{a,<} \rightarrow 1 $ when $r \rightarrow \infty $.
Thus, it is with probability $ 1 $ that for arbitrary $ \alpha <1 $, $$
\Pr\bigg[ \mathcal{E}_{intf} > \frac{1}{T^{\prime}} \bigg]
\geq  Q \bigg[ \bigg(\frac{1}{T^{\prime}} - \alpha  \mathcal{M}_{i+1} \bigg) \sqrt{\frac{1}{\alpha  \mathcal{V}_{i+1}}} \bigg]. $$
Similarly, it is with probability $ 1 $ that for arbitrary $ \beta > 1 $, $$
\Pr\bigg[ \mathcal{E}_{intf} > \frac{1}{T^{\prime}} \bigg]
\leq Q \bigg[ \bigg(\frac{1}{T^{\prime}} - \beta  \mathcal{M}_{i+1} \bigg) \sqrt{\frac{1}{\beta  \mathcal{V}_{i+1}}} \bigg]. $$ Thus, it is with probability $ 1 $ that $$
\Pr\bigg[ \mathcal{E}_{intf} > \frac{1}{T^{\prime}} \bigg]
= Q \bigg[ \bigg(\frac{1}{T^{\prime}} -  \mathcal{M}_{i+1} \bigg) \sqrt{\frac{1}{ \mathcal{V}_{i+1}}} \bigg]. $$
Finally, we have $
\Pr \bigg[ \gamma_{1,1}^{i+1} < T \bigg]  = \Pr\bigg[ \mathcal{E}_{intf} > \frac{1}{T(\frac{{\left| {\Phi^\dagger_i} \right|}}{{{L_i}}} + 1)} \bigg] = Q \bigg[ \bigg(\frac{1}{T (\frac{{ {\Phi^\dagger_i} }}{{{L_i}}} + 1) } - \mathcal{M}_{i+1} \bigg) \sqrt{\frac{1}{\mathcal{V}_{i+1}}} \bigg]. $

\section*{Appendix C: Proof of Lemma \ref{Ext_M&V}}
Without loss of generality, we only provide the proof for $ b=1 $. Since large-scale fadings $\rho_{1,m}$ and $\rho_{1,j}$ are known to the target BS, it can be deduced from \eqref{DefM} and \eqref{DefV} that	 
\begin{eqnarray}
	\sum\limits_{\forall (\ell,n) \notin \Phi^\dagger_{i-1}}\!\!\!\!\!\!\mathbb{E} \left[ \rho_{\ell,n}  \right] \!\!\!& = &\!\!\! \frac{{M({\cal M}_i^2(1,i)\rho _{1,m}^2 - {\cal M}_i^2(1,j)\rho _{1,j}^2)}}{{{\rho _{1,m}} - {\rho _{1,j}}}},\nonumber \\
	\sum\limits_{\forall (\ell,n) \notin \Phi^\dagger_{i-1}}\!\!\!\!\!\!\mathbb{E} \left[ \rho_{\ell,n}^2 \right] \!\!\!\!& = &\!\!\!\! \frac{{{\!L_i^\dagger}{\rho _{1,m}}{\rho _{1,j}}(\!{{\cal M}_i}(\!1,m\!){\rho _{1,m}} \!\!-\!\! {{\cal M}_i}(\!1,j\!){\rho _{1,j}}\!)}}{{{\rho _{1,j}} \!-\! {\rho _{1,m}}}}. \nonumber			
\end{eqnarray}
Note that for arbitrary $(1,k)$-th user, the mean $\mathcal{M}_i(1,k) = \mathbb{E}\!\!\! \sum\limits_{\forall (\ell,n) \notin \Phi^\dagger_{i-1}}\!\!\!  \left[  \frac{1}{M} \frac{\rho_{\ell,n}}{\rho_{1,k} } \!+ \!\frac{1}{L_i^\dagger} (\frac{\rho_{\ell,n}}{\rho_{1,k} })^2\right]$. Substituting the above two equations into the definition of ${\cal M}_i(1,k)$, \eqref{DeduceM} can be proved.
Moreover, the variance $\mathcal{V}_i(1,k)$ can be obtained similarly.

%-----------------------------------------------------------
\bibliographystyle{IEEEtran}
\bibliography{ray}

% Generated by IEEEtran.bst, version: 1.14 (2015/08/26)
\begin{thebibliography}{10}
\providecommand{\url}[1]{#1}
\csname url@samestyle\endcsname
\providecommand{\newblock}{\relax}
\providecommand{\bibinfo}[2]{#2}
\providecommand{\BIBentrySTDinterwordspacing}{\spaceskip=0pt\relax}
\providecommand{\BIBentryALTinterwordstretchfactor}{4}
\providecommand{\BIBentryALTinterwordspacing}{\spaceskip=\fontdimen2\font plus
\BIBentryALTinterwordstretchfactor\fontdimen3\font minus
  \fontdimen4\font\relax}
\providecommand{\BIBforeignlanguage}[2]{{%
\expandafter\ifx\csname l@#1\endcsname\relax
\typeout{** WARNING: IEEEtran.bst: No hyphenation pattern has been}%
\typeout{** loaded for the language `#1'. Using the pattern for}%
\typeout{** the default language instead.}%
\else
\language=\csname l@#1\endcsname
\fi
#2}}
\providecommand{\BIBdecl}{\relax}
\BIBdecl

\bibitem{Jose:09}
J.~Jose, A.~Ashikhmin, T.~Marzetta, and S.~Vishwanath, ``Pilot contamination
  problem in multi-cell {TDD} systems,'' in \emph{2009 IEEE Intl. Symp. on
  Information Theory (ISIT)}, June 2009, pp. 2184--2188.

\bibitem{XudongZhu:2015}
X.~Zhu, Z.~Wang, C.~Qian, L.~Dai, J.~Chen, S.~Chen, and L.~Hanzo, ``Soft pilot
  reuse and multicell block diagonalization precoding for massive {MIMO}
  systems,'' \emph{IEEE Trans. on Vehicular Tech.}, vol.~65, no.~5, pp.
  3285--3298, May 2016.

\bibitem{2017WCLZZZ}
Y.~{Li}, R.~{Wang}, and Z.~{Zhang}, ``Massive mimo downlink goodput analysis
  with soft pilot or frequency reuse,'' \emph{IEEE Wireless Communications
  Letters}, vol.~7, no.~3, pp. 448--451, June 2018.

\bibitem{Masood:2015}
M.~Masood, L.~Afify, and T.~Al-Naffouri, ``Efficient coordinated recovery of
  sparse channels in massive {MIMO},'' \emph{IEEE Trans. on Signal Processing},
  vol.~63, no.~1, pp. 104--118, Jan. 2015.

\bibitem{Fernandes:13}
F.~Fernandes, A.~Ashikhmin, and T.~Marzetta, ``Inter-cell interference in
  noncooperative {TDD} large scale antenna systems,'' \emph{IEEE Journal on
  Selected Areas in Commun.}, vol.~31, no.~2, pp. 192--201, 2013.

\bibitem{Upadhya:2017}
K.~Upadhya, S.~A. Vorobyov, and M.~Vehkapera, ``Superimposed pilots are
  superior for mitigating pilot contamination in massive {MIMO},'' \emph{IEEE
  Trans. on Signal Processing}, vol.~65, no.~11, pp. 2917--2932, June 2017.

\bibitem{2018Access}
M.~A. {Teeti}, R.~{Wang}, and R.~{Abdolee}, ``On the uplink achievable rate for
  massive mimo with 1-bit adc and superimposed pilots,'' \emph{IEEE Access},
  vol.~6, pp. 37\,627--37\,643, 2018.

\bibitem{Liyang17}
Y.~Li, R.~Wang, Y.~Chen, and S.~Zhu, ``Exploiting temporal channel correlation
  in data-assisted massive {MIMO} uplink detection,'' \emph{IEEE Commun.
  Letters}, vol.~21, no.~2, pp. 430--433, Feb. 2017.

\bibitem{2014Globecom}
R.~{Wang}, Y.~{Chen}, and H.~{Tan}, ``Data-assisted channel estimation for
  uplink massive mimo systems,'' in \emph{2014 IEEE Global Communications
  Conference}, Dec 2014, pp. 3766--3771.

\bibitem{Ray:15}
R.~Wang, Y.~Chen, H.~Tan, and Q.~Zhang, ``Data-assisted massive {MIMO} uplink
  transmission with large backhaul cooperation delay: Scheme design and
  system-level analysis,'' in \emph{2015 IEEE Global Commun. Conf. (GLOBECOM)},
  Dec. 2015.

\bibitem{2015DSP}
R.~{Wang}, Y.~{Chen}, Q.~{Zhang}, and H.~{Wang}, ``A stochastic geometry based
  performance analysis framework for massive mimo systems with data-assisted
  uplink detection scheme,'' in \emph{2015 IEEE International Conference on
  Digital Signal Processing (DSP)}, July 2015, pp. 1143--1147.

\bibitem{2017CL}
Y.~{Li}, R.~{Wang}, Y.~{Chen}, and S.~{Zhu}, ``Exploiting temporal channel
  correlation in data-assisted massive mimo uplink detection,'' \emph{IEEE
  Communications Letters}, vol.~21, no.~2, pp. 430--433, Feb 2017.

\bibitem{2017WCSP}
Y.~{Li}, R.~{Wang}, H.~{Tan}, Y.~{Chen}, and Q.~{Zhang}, ``Massive mimo uplink
  transmission with pilot extension and system-level analysis,'' in \emph{2017
  9th International Conference on Wireless Communications and Signal Processing
  (WCSP)}, Oct 2017, pp. 1--6.

\bibitem{StochasticGeo}
T.~Bai and R.~W. Heath, ``Analyzing uplink sinr and rate in massive mimo
  systems using stochastic geometry,'' \emph{IEEE Trans. on Commun.}, vol.~64,
  no.~11, pp. 4592--4606, Nov 2016.

\bibitem{2017GlobecomZZZ}
Z.~{Zhang}, Z.~{Zhou}, R.~{Wang}, and Y.~{Li}, ``Downlink goodput analysis for
  massive mimo networks with underlaid d2d,'' in \emph{GLOBECOM 2017 - 2017
  IEEE Global Communications Conference}, Dec 2017, pp. 1--6.

\bibitem{AccessZZ}
Z.~Zhang, Y.~Li, and R.~Wang, ``Suppressing pilot contamination in massive mimo
  downlink via cross-frame scheduling,'' \emph{IEEE Access}, vol.~6, pp.
  44\,858--44\,867, 2018.

\bibitem{2018Globecom}
Z.~{Zhang}, R.~{Wang}, and Y.~{Li}, ``Rate adaptation of d2d underlaying
  downlink massive mimo networks with reinforcement learning,'' in \emph{2018
  IEEE Global Communications Conference (GLOBECOM)}, Dec 2018, pp. 1--7.

\bibitem{2019TWC}
Z.~{Zhang}, Y.~{Li}, R.~{Wang}, and K.~{Huang}, ``Rate adaptation for downlink
  massive mimo networks and underlaid d2d links: A learning approach,''
  \emph{IEEE Transactions on Wireless Communications}, vol.~18, no.~3, pp.
  1819--1833, March 2019.

\bibitem{TR25996:3GPP}
TR25.996, \emph{Spatial channel model for Multiple Input Multiple Output
  ({MIMO}) simulations}, 3GPP Std., Sept. 2012.

\bibitem{Hoydis:13}
J.~Hoydis, S.~ten Brink, and M.~Debbah, ``Massive {MIMO} in the {UL/DL} of
  cellular networks: How many antennas do we need?'' \emph{IEEE Journal on
  Selected Areas in Commun.}, vol.~31, no.~2, pp. 160--171, Feb. 2013.

\end{thebibliography}
%-----------------------------------------------------------

\end{document}